\newcommand{\bsup}{\mathop{\mathrm{sup}}}
\newcommand{\bmax}{\mathop{\mathrm{max}}}
\newcommand{\bmin}{\mathop{\mathrm{min}}}
\newcommand{\bargmin}{\mathop{\mathrm{arg\ min}}}
\numberwithin{theorem}{section}
\begin{document}

\title{\LARGE Doubly Robust Sure Screening for Elliptical \\ Copula Regression Model}

	\author{Yong He\thanks{ School of Statistics, Shandong University of Finance and Economics, Jinan, China},~~Liang Zhang\thanks{School of Statistics, Shandong University of Finance and Economics, Jinan, China},~Jiadong Ji,\thanks{School of Statistics, Shandong University of Finance and Economics, Jinan, China; Email:{\tt jiadong@sdufe.edu.cn}.} ~~ Xinsheng Zhang\thanks{ School of Management, Fudan University, Shanghai, China}}	
	\date{}	
	\maketitle
	
	Regression analysis has always been a hot research topic in statistics. We propose a very flexible semi-parametric regression model called Elliptical Copula Regression (ECR) model, which covers
a large class of linear and nonlinear
regression models such as additive regression model, single index model. Besides, ECR model can capture the heavy-tail characteristic and tail dependence between variables, thus it could be widely applied in many areas such as econometrics and finance. In this paper we mainly focus on the feature screening problem for ECR model in ultra-high dimensional setting. We propose a doubly robust sure screening procedure for ECR model, in which two types of correlation coefficient are involved: Kendall' tau correlation and Canonical correlation.  Theoretical analysis shows that the procedure enjoys sure screening property, i.e., with probability tending to 1, the screening procedure  selects out all important variables and substantially reduces the dimensionality to a moderate size against the sample size. Thorough numerical studies are conducted to illustrate its advantage over existing sure independence screening methods and thus it can be used as a safe replacement of the existing procedures in practice. At last, the proposed procedure is applied on a gene-expression real data set to show its empirical usefulness.

\vspace{2em}

\textbf{Keyword:} Canonical Correlation; Doubly Robust; Elliptical Copula;  Kendall' tau; Sure Screening.

\section{Introduction}
In the last decades, data sets with large dimensionality have arisen in various areas such as finance, chemistry and so on due to the great development of the computer storage capacity and processing power and feature selection with these big data is of fundamental importance to many contemporary applications.
The sparsity assumption is common in high dimensional feature selection literatures, i.e., only a few variables are critical in for in-sample fitting and out-sample forecasting of certain response of interest. In specific, for the linear regression setting, statisticians care about how to select out the important variables  from thousands or even millions of variables. In fact,  a huge amount of literature springs up since the appearance of the Lasso estimator \citep{tibshirani1996regression}. To name a few, there exist SCAD by  \cite{fan2001variable}, Adaptive Lasso by \cite{zou2006adaptive}, MCP by \cite{zhang2010nearly}, the Dantzig selector by \cite{candes2007dantzig}, group Lasso by \cite{Yuan2006Model}.  This research area is very active,
and as a result, this list of references here is illustrative rather than comprehensive. The aforementioned feature selection methods perform well when the dimensionality is not ``too'' large, theoretically in the sense that it is of polynomial order of the sample size.  However, in the ultrahigh dimensional setting where the dimensionality is of exponential order of the sample size, the aforementioned methods may  encounter both theoretical and computational issue. Take the Dantzig selector for example, the Uniform Uncertainty Principle (UUP) condition to guarantee the oracle property may be difficult to satisfy, and the computational cost would increase dramatically by implementing linear programs in ultra-high dimension. \cite{Fan2008Sure} first proposed Sure Independence Screening (SIS) and its further improvement, Iterative Sure Independence Screening (ISIS),  to alleviate the computational burden in ultra-high dimensional setting. The basic idea goes as follows. In the first step, reduce the dimensionality to a moderate size against the sample size by sorting the marginal pearson correlation between covariates and the response and removing those covariates whose marginal correlation with response are lower than a certain threshold. In the second stage  perform Lasso, SCAD etc. to the variables survived in the first step. The SIS (ISIS) turns out to enjoy sure screening property under certain conditions, that is, with probability tending to 1, the screening procedure  selects out all important variables. The last decade has witnessed plenty of variants of SIS to handle the ultra-high dimensionality for more general regression models. \cite{Fan2011Nonparametric} proposed a sure screening procedure for ultra-high dimensional additive models. \cite{Fan2014Nonparametric} proposed a sure screening procedure for ultra-high dimensional varying coefficient models. \cite{Song2014Censored} proposed censored rank independence screening of high-dimensional survival data which is robust to predictors that contain outliers and works well for a general class of survival models. \cite{Zhu2011Model} and \cite{Cui2015Model} proposed model-free feature screening.
\cite{Li2012Robust} proposed to screen Kendall's tau correlation while \cite{Runze2012Feature} proposed to screen distance correlation which both show robustness to heavy tailed data. \cite{Kong2017Sure} proposed to screen the canonical correlation between the response and all possible sets of $k$ variables, which performs well particularly for selecting out variables that are pairwise jointly important with other variables but marginally insignificant.  This list of references for screening methods is also illustrative rather than comprehensive. For the development of the screening methods in the last decade, we refer to the review paper of \cite{Liu2015A} and \cite{Fan2017Sure}.

The main contribution of the paper is two-fold. On the one hand, we  innovatively propose a very flexible semi-parametric regression model called Elliptical Copula Regression (ECR) model, which can capture the thick-tail property of variables  and the tail dependence between variables. In specific, the ECR model has the following representation:
\begin{equation}\label{equation:ellipticalcopularegmodel}
f_{0}(Y)=\beta_0+\sum_{j=1}^p\beta_jf_{j}(X_j)+\epsilon,
\end{equation}
where $Y$ is response variable, $X_1\ldots,X_p$ are predictors, $f_{j}(\cdot)$ are univariate monotonic functions.  We say $(Y,\bX^\top)^\top=(Y,X_1,\ldots,X_p)^\top$ satisfies a Elliptical copula regression model if  the marginally transformed random vectors $\tilde{\bZ}=(\tilde{Y},\tilde{\bX}^\top)^\top\overset{\bigtriangleup}{=}(f_0(Y),f_1(X_1),\ldots,f_p(X_p))^\top$ follows elliptical distribution. From the representation of ECR model in (\ref{equation:ellipticalcopularegmodel}), it can be seen that the ECR model covers a large class of linear and nonlinear regression models such as additive regression model, single index model which makes it more  applicable in many areas such as econometrics, finance and bioinformatics. On the other hand, we propose a doubly robust dimension reduction procedure for the ECR model in the ultrahigh dimensional setting. The doubly robustness is achieved by combining two types of correlation, which are Kendall' tau correlation and canonical correlation. The canonical correlation is employed to capture the joint information of a set of covariates and the joint relationship between the response and this set of covariates. Note that for ECR model in (\ref{equation:ellipticalcopularegmodel}), only $(Y,\bX^\top)^\top$ is observable rather than the transformed variables. Thus the Kendall's tau correlation is exploited to estimate the canonical correlations due to its invariance under strictly monotone marginal transformations.  The dimension reduction procedure for ECR model is achieved by sorting the estimated
canonical correlations and leaving the variable that attributes a relatively high canonical
correlation at least once into the active set.  The proposed screening procedure enjoys
the sure screening property and reduces the dimensionality substantially to a
moderate size under mild conditions. Numerical results shows that the proposed approach enjoys great advantage over state-of-the-art procedures and thus it can be used as a safe replacement.

We introduce some notations adopted in the paper. For any vector $\bmu=(\mu_1,\ldots,\mu_d) \in \RR^d$, let $\bmu_{-i}$ denote the $(d-1)\times 1$ vector by removing the $i$-th entry from $\bmu$. $|\bmu|_0=\sum_{i=1}^d I\{\mu_i\neq 0\}$, $|\bmu|_1=\sum_{i=1}^d |\mu_i|$, $|\bmu|_2=\sqrt{\sum_{i=1}^d\mu_i^2}$ and $|\bmu|_\infty=\max_i|\mu_i|$. Let $\bA=[a_{ij}]\in \mathbb{R}^{d\times d}$. $\|\bA\|_{L_1}=\mathrm{max}_{1\leq j\leq d}\sum_{i=1}^d|a_{ij}|$, $\|\bA\|_\infty=\mathrm{max}_{i,j}|a_{ij}|$ and $\|\bA\|_1=\sum_{i=1}^d\sum_{j=1}^d|a_{ij}|$. We use $\lambda_{\mathrm{min}}(\bA)$ and $\lambda_{\mathrm{max}}(\bA)$ to denote the smallest and largest eigenvalues of $\bA$ respectively.  For a set $\mathcal{H}$, denote by $|\mathcal{H}|$ the cardinality of $\mathcal{H}$. For a real number $x$, denote by $\lfloor x \rfloor$ the largest integer smaller than or equal to $x$. For two sequences of real numbers $\{a_n\}$ and $\{b_n\}$, we write $a_n=O(b_n)$ if there exists a constant $C$ such that $|a_n|\leq C|b_n|$ holds for all $n$,  write $a_n=o(b_n)$ if $\lim_{n\rightarrow \infty} a_n/b_n=0$, and write $a_n \asymp b_n$ if there exist  constants $c$ and $C$ such that $c\leq a_n/b_n \leq C$ for all $n$.

The rest of the paper is organized as follows: in Section 2, we introduce the Elliptical copula regression model and  present the proposed dimension reduction procedure by ranking the estimated canonical correlations. In Section
3, we present the theoretical properties of the proposed procedure,with more detailed proofs collected in the Appendix. In Section 4,  we conduct thorough numerical simulations to investigate the empirical performance of the procedure. In section 5, a real gene-expression data example is given to illustrate its empirical usefulness. At last, we give a brief discussion on possible future
directions in the last section.
\section{Methodology}
\subsection{Elliptical Copula Regression Model}
To present the  Elliptical Copula Regression Model, we first need to introduce the elliptical distribution. The elliptical distribution generalizes the multivariate normal distribution, which includes symmetric distributions with heavy tails, like the multivariate $t$-distribution. Elliptical distributions are commonly used in robust statistics to evaluate proposed multivariate-statistical procedures. In specific, the definition of elliptical distribution is given as follows:
\begin{definition}\textbf{(Elliptical distribution)}\label{def:ED} Let $\bmu\in \RR^p$ and $\bSigma\in \RR^{p\times p}$ with  $\text{rank}(\bSigma)=q\leq p$. A $p$-dimensional random vector $\bZ$ is elliptically distributed, denoted by $\bZ\sim ED_p(\bmu,\bSigma,\zeta)$, if it has a stochastic representation
\[
\bZ\overset{d}{=}\bmu+\zeta\Ab\bU.
\]
where $\bU$ is a random vector uniformly distributed on the unit sphere $S^{q-1}$
in $\RR^q$, $\zeta\geq 0$ is a scalar random variable independent of $\bU$, $\Ab\in\RR^{p\times q}$ is
a deterministic matrix satisfying $\Ab\Ab^\top=\bSigma$ with $\bSigma$ called scatter matrix.
\end{definition}

The representation $\bZ\overset{d}{=}\bmu+\zeta\Ab\bU.$ is not identifiable since we
can rescale $\zeta$ and $\Ab$.  We require $\EE\zeta^2=q$ to make the model identifiable, which makes the covariance matrix of $\bZ$ to be $\bSigma$. In addition, we assume $\bSigma$ is non-singular, i.e., $q=p$. In this paper, we only consider continuous elliptical distributions with $\text{Pr}(\zeta=0)=0$.

Another equivalent definition of the elliptical distribution is by its characteristic function,
which has the form $\exp(i\bt^\top \bu) \psi(\bt^\top\bSigma \bt)$, where $\psi(\cdot)$ is a properly defined characteristic function
and $i=\sqrt{-1}$. $\zeta$ and $\psi$ are mutually determined by each other.
Given the definition of Elliptical distribution, we are ready for introducing the Elliptical Copula Regression (ECR) model.
\begin{definition}\textbf{(Elliptical copula regression model)}\label{def:regmodel}
 Let $f=\{f_0,f_1,\ldots,f_p\}$ be a set of monotone univariate functions and $\bSigma$ be a positive-definite correlation matrix with diag($\bSigma$)=$\Ib$. We say a $d$-dimensional random variable $\bZ=(Y,X_1,\ldots,X_p)^\top$  satisfies the Elliptical   Copula Regression model if and only if $\tilde{\bZ}=(\tilde{Y},\tilde{\bX}^\top)^\top=(f_0(Y),f_1(X_1),\ldots,f_p(X_p))^\top\sim ED_d(\zero,\bSigma,\zeta)$ with $\EE\zeta^2=d$ and

\begin{equation}\label{equation:regmodel}
\tilde{Y}=\tilde{\bX}^\top\bbeta+\epsilon, \ \ \text{or} \ \ \text{equivalently}, \ \ f_0(Y)=\sum_{j=1}^p\beta_jf_{j}(X_j)+\epsilon,
\end{equation}
 where $Y$ is the response and $\bX=(X_1,\ldots,X_p)^\top$ are covariates, $d=p+1$.
 \end{definition}
 We require diag($\bSigma$)=$\Ib$ in Definition \ref{def:regmodel} for identifiability because the shifting and scaling are absorbed into the marginal functions $f$. For ease of presentation, we denote $Z=(Y,X_1,\ldots,X_p)^\top\sim\text{ECR}(\bSigma,\zeta,f)$ in the following sections.
The ECR model allows the data to come from heavy-tailed distribution and  is thus more flexible and more useful in modelling many modern data sets, including financial data, genomics data and fMRI data.

Notice that the transformed variable $\tilde{Y}$ and the transformed covariates $\tilde{\bX}$  obeys the linear regression model, however, the transformed variables are unobservable, only
$Y,\bX$ are observable. In the following, by virtue of  canonical correlation and Kendall's  tau correlation,  we will present an adaptive screening procedure without estimating the marginal transformation functions $f$  while  capturing the joint information of a set of
covariates and the joint relationship between the response and this set of covariates.

\subsection{Adaptive Doubly Robust Screening for ECR Model}
In this section we will present the  adaptive doubly robust screening for ECR model. We first introduce the Kendall's tau-based estimator of correlation matrix in subsection \ref{sec:CorrMatrix}, then we introduce the Kendall's tau-based estimator of canonical correlation in subsection \ref{sec:kbcc}, which are both of fundamental importance for the detailed doubly robust screening procedure introduced in subsection \ref{sec:sp}.

\subsubsection{Kendall's tau Based Estimator of Correlation Matrix}\label{sec:CorrMatrix}
In this section we present the  estimator of the correlation matrix based on Kendall's tau. Let $\bZ_{1},\ldots,\bZ_n$ be $n$ independent observations where $\bZ_i=(Y_i,X_{i1},\ldots,X_{ip})^\top$. The sample Kendall's tau correlation of $Z_j$ and $Z_k$ is defined by
\[
\hat{\tau}_{j,k}=\frac{2}{n(n-1)}\sum_{1\leq i< i'\leq n}\text{sign}\{(Z_{ij}-Z_{i^\prime j})(Z_{ik}-Z_{i^\prime k})\}.
\]
Let $\tilde{\bZ}_i=(\tilde{Y_i},\tilde{\bX}_i)^\top=(f_0(Y_i),f_1(X_{i1}),\ldots,f_p(X_{ip}))^\top$ for $i=1,\ldots,n$, then $\tilde{\bZ}_i$ can be viewed as the latent observations from Elliptical distribution $ED(\zero,\bSigma,\zeta)$. We can estimate  $\Sigma_{j,k}$ (the $(j,k)$-th element of $\bSigma$) by $\hat{S}_{j,k}$ where
\begin{equation}\label{equation:kendall}
\hat{S}_{j,k}=\sin(\frac{\pi}{2}\hat{\tau}_{j,k}).
\end{equation}
This is because the Kendall's tau correlation is invariant under strictly monotone marginal transformations and the fact that $\Sigma_{j,k}=\sin(\frac{\pi}{2}\tau_{j,k})$ holds for Elliptical distribution. Define by $\hat{\bS}=[\hat{S}_{j,k}]_{d\times d}$ with $\hat{S}_{j,k}$ defined in Equation (\ref{equation:kendall}). We call $\hat{\bS}$ the rank-based estimator of correlation matrix.
\subsubsection{Kendall's tau Based Estimator of Canonical Correlation}\label{sec:kbcc}
Canonical Correlation (CC) could capture the pairwise correlations within a subset of covariates and the joint regression relationship between the response and the subset of covariates.
In this section, we present the Kendall's tau based estimator of CC between the transformed response $f_0(Y)$ and $k$ (a fixed number) transformed covariates $\{f_{m_1}({X}_{m_1}),\ldots,f_{m_k}({X}_{m_k})\}$, which bypasses estimating the marginal transformation functions.

Recall that for ECR model, $(f_0(Y),f_1(X_{1}),\ldots,f_p(X_{p}))^\top\sim ED(\zero,\bSigma,\zeta)$ and its corresponding correlation matrix is exactly $\bSigma=(\Sigma_{s,t})$.
Denote $\cI=\{1\}$ and $\cJ=\{{m_1,\ldots, m_k}\}$, the CC between $\tilde{Y}$ and $\{\tilde{X}_{m_1},\ldots,\tilde{X}_{m_k}\}$ is defined as
\[
\rho^c=\bsup_{\ba,\bb}\frac{\ba^\top\bSigma_{\cI\times\cJ}\bb}{\sqrt{\ba^\top\bSigma_{\cI\times\cI}\ba}\sqrt{\bb^\top\bSigma_{\cJ\times\cJ}\bb}},
\]
where we define $\bSigma_{\cI\times\cJ}=(\Sigma_{s,t})_{s\in{\cI},t\in{\cJ}}$ and suppress its dependence on parameter $k$.  It can be shown that
\[
(\rho^c)^2=\bSigma_{\cI\times\cJ}\bSigma_{\cJ\times\cJ}^{-1}\bSigma_{\cI\times\cJ}^\top.
\]
In Section \ref{sec:CorrMatrix} we present the Kendall's tau based estimator of $\bSigma$ and denote it by $\hat{\bS}$. Thus the Canonical Correlation $\rho^c$ can be naturally estimated by:
\begin{equation}\label{equ:rhohat}
\hat{\rho}^c=\sqrt{\hat{\bS}_{\cI\times\cJ}\hat{\bS}_{\cJ\times\cJ}^{-1}\hat{\bS}_{\cI\times\cJ}^\top}.
\end{equation}
If $\hat{\bS}_{\cJ\times\cJ}$ is not positive definite (not invserible), we  first project $\hat{\bS}_{\cJ\times\cJ}$ into the
cone of positive semidefinite matrices. In particular, we propose to solve the following convex optimization problem:
\[
\tilde{\bS}_{\cJ\times\cJ}=\bargmin_{\bS}\|\hat{\bS}_{\cJ\times\cJ}-\bS\|_{\infty}.
\]
The matrix element-wise infinity norm $\|\cdot\|_\infty$ is adopted  for the sake of further technical developments. Empirically, we can use a surrogate projection procedure that computes a singular value decomposition of $\hat{\bS}_{\cJ\times\cJ}$ and truncates all of the negative singular values to be zero. Numerical study shows that this procedure works well.

\subsubsection{Screening procedure}\label{sec:sp}
 In this section, we present the screening procedure by sorting canonical correlation estimated by Kendall' tau.

The Screening procedure goes as follows: first collect all sets of $k$ transformed variables and total adds up to $\cC_p^k$, the combinatorial number, i.e., $\{\tilde{X}_{l,m_1},\ldots,\tilde{X}_{l,m_k}\}$ with $l=1,\ldots,\cC_p^k$.  For each $k$-variable
set $\{\tilde{X}_{l,m_1},\ldots,\tilde{X}_{l,m_k}\}$, we denote its canonical correlation with $f_0(Y)$ by ${\rho}_l^{c}$ and estimate it by
\[
\hat{\rho}_l^{c}=\sqrt{\hat{\bS}_{\cI\times\cJ}\hat{\bS}_{\cJ\times\cJ}^{-1}\hat{\bS}_{\cI\times\cJ}^\top}.
\]
where $\hat{\bS}$ is the rank-based estimator of correlation matrix introduced in Section \ref{sec:CorrMatrix}.
 Then we sort these canonical correlations $\{\hat{\rho}_l^{c},l=1,\ldots,\cC_p^k\}$ and select the variables that attributes a relatively large canonical correlation at least once into the active set.

 Specifically, let $\cM_*=\{1\leq i\leq p, \beta_i\neq 0\}$ be the true model with size $s=|\cM_*|$ and define sets
\[
\cI_i^n=\Big\{l; (X_i,X_{i_1},\ldots,X_{i_{k-1}}) \ \text{with}\ \bmax_{1\leq m\leq k-1}|i_m-i|\leq k_n \ \text{is used in calculating} \  \hat{\rho}_l^{c} \Big\}, i=1,\ldots,p,
\]
where $k_n$ is a parameter determining a neighborhood set in which variables
jointly with $X_i$ are included to calculate the canonical correlation with the response. Finally  we estimate the active set as follow:
\[
\hat{\cM}_{t_n}=\Big\{1\leq i\leq p:\bmax_{l\in \cI_i^n}\hat{\rho}_l^c>t_n\Big\}
\]
where $t_n$ is a threshold parameter which controls the  the size of the estimated
active set.

If we set $k_n=p$, then all $k$-variable sets including $X_i$ are considered in $\cI_i^n$.  However, if there is a natural index for all the covariates such that only the neighboring covariates are related, which is often the case in portfolio tracking in finance, it is more appropriate to consider a $k_n$ much smaller than $p$.  As for the parameter $k$, a relatively large $k$  may bring more accurate results, but will increase the computational burden.  Empirical simulation results show that the performance by
 by taking $k=2$ is already good enough and substantially better than taking $k=1$ which is equivalent to sorting marginal correlation.

\section{Theoretical properties}
In this section, we present the theoretical properties of the proposed approach. In the screening problem, what we care about most is whether the true non-zero index set $\cM_*$ is contained in the estimated active set $\hat{\cM}_{t_n}$ with high probability for properly chosen threshold $t_n$, i.e., whether the procedure has sure screening property. To this end, we assume the following three assumptions hold.
\vspace{1em}

\textbf{Assumption 1}  Assume $p>n$ and $\log p=O(n^\xi)$ for some $\xi\in(0,1-2\kappa)$.

\vspace{1em}

\textbf{Assumption 2} For all $l=1,\ldots,\cC_p^k$, $\lambda_{\max}((\bSigma_{\cJ^l\times\cJ^l})^{-1})\leq c_0$ for some constant $c_0$, where ${\cJ}^l=\{m_1^l,\ldots,m_k^l\}$ is the index set of variables in the $l$-th $k$-variable sets.
\vspace{1em}

\textbf{Assumption 3} For some $0\leq \kappa\leq 1/2$, $\bmin_{i\in \cM_*}\bmax_{l\in\cI_i^n}\rho_l^c\geq c_1 n^{-\kappa}$.

\vspace{1em}

Assumption 1 specifies the scaling between the dimensionality $p$ and  the sample size $n$. Assumption 2 requires that the minimum eigenvalue of the covariance matrix of any $k$ covariates is lower bounded. Assumption 3 is the fundamental basis for guaranteeing the sure screening property, which means that any important variable
is correlated to the response jointly with some other variables. Technically, the Assumption 3 entails that an important variable would not be veiled by
the statistical approximation error resulting from the estimated canonical correlation.

\begin{theorem}\label{theorem:1} Assume that Assumptions 1-3 hold, then  for some positive constants $c_1^*$ and $C$, as $n$ goes to infinity, we have
\[
\PP\left(\bmin_{i\in \cM_*}\bmax_{l\in\cI_i^n}\hat{\rho}_l^c\geq c_1^* n^{-\kappa}\right)\geq 1-O\left(\exp\left(-{Cn^{1-2\kappa}}\right)\right),
\]
and
\[
\PP\left(\cM_*\subset\hat{\cM}_{c_1^* n^{-\kappa}}\right)\geq 1-O\left(\exp\left(-{Cn^{1-2\kappa}}\right)\right).
\]
\end{theorem}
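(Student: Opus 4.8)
The plan is to route the whole argument through a single high‑probability event on which the Kendall's‑tau correlation matrix estimator $\hat\bS$ is uniformly close to $\bSigma$ in the element‑wise $\infty$‑norm, and then convert that closeness into the stated lower bound on the estimated canonical correlations by a deterministic perturbation step. First I would establish the concentration of $\hat\bS$. Each sample Kendall's tau $\hat\tau_{j,k}$ is a $U$‑statistic of order two whose kernel takes values in $\{-1,1\}$ and whose mean is the population Kendall's tau $\tau_{j,k}$, which by monotone invariance and the elliptical law satisfies $\Sigma_{j,k}=\sin(\tfrac{\pi}{2}\tau_{j,k})$ (Section~\ref{sec:CorrMatrix}). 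Hoeffding's inequality for $U$‑statistics gives $\PP(|\hat\tau_{j,k}-\tau_{j,k}|\ge t)\le 2\exp(-\lfloor n/2\rfloor t^2/2)$, and since $z\mapsto\sin(\tfrac{\pi}{2}z)$ is $\tfrac{\pi}{2}$‑Lipschitz this yields $\PP(|\hat S_{j,k}-\Sigma_{j,k}|\ge t)\le 2\exp(-c_2 n t^2)$ for an absolute constant $c_2>0$ and $n$ large, whence a union bound over the $O(d^2)$ entries gives $\PP(\|\hat\bS-\bSigma\|_\infty\ge t)\le 2d^2\exp(-c_2 nt^2)$.

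The crux is the deterministic part. Because $\cI=\{1\}$ and $\mathrm{diag}(\bSigma)=\Ib$, one can rewrite $\rho_l^c=\big\|\bSigma_{\cJ^l\times\cJ^l}^{-1/2}\bSigma_{\cI\times\cJ^l}^\top\big\|_2$, the Euclidean norm of a vector of fixed length $k$. Fix $\delta_0=\delta_0(k,c_0)>0$ so small that $k\delta_0<\tfrac{1}{2c_0}$ and work on $\mathcal A_n:=\{\|\hat\bS-\bSigma\|_\infty\le\delta_0\}$. On $\mathcal A_n$, Assumption~2 gives $\lambda_{\mathrm{min}}(\bSigma_{\cJ^l\times\cJ^l})\ge 1/c_0$ and $\|\hat\bS_{\cJ^l\times\cJ^l}-\bSigma_{\cJ^l\times\cJ^l}\|_{\mathrm{op}}\le k\|\hat\bS-\bSigma\|_\infty\le k\delta_0$, so every block $\hat\bS_{\cJ^l\times\cJ^l}$ is positive definite with $\lambda_{\mathrm{min}}\ge 1/c_0-k\delta_0>0$; in particular the positive‑semidefinite projection defining $\tilde\bS_{\cJ^l\times\cJ^l}$ is inactive there. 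Combining $\|\bSigma_{\cI\times\cJ^l}^\top\|_2\le\sqrt k$, the bound $\|\hat\bS_{\cJ^l\times\cJ^l}^{-1/2}\|_{\mathrm{op}}\le(1/c_0-k\delta_0)^{-1/2}$, and the operator‑Lipschitz estimate $\|\hat\bS_{\cJ^l\times\cJ^l}^{-1/2}-\bSigma_{\cJ^l\times\cJ^l}^{-1/2}\|_{\mathrm{op}}\le C(1/c_0-k\delta_0)^{-3/2}\|\hat\bS-\bSigma\|_\infty$ with the triangle inequality for the norm‑of‑a‑vector form of $\rho_l^c$, one gets $\max_{l}|\hat\rho_l^c-\rho_l^c|\le C_3\|\hat\bS-\bSigma\|_\infty$ on $\mathcal A_n$, where $C_3=C_3(k,c_0)$ does not depend on $l$; this uniformity over the $\cC_p^k$ candidate sets is exactly what the uniform eigenvalue bound of Assumption~2 buys.

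Then I would assemble the pieces. Choose $c_1^*\in(0,c_1)$ and $\epsilon=\epsilon(k,c_0,c_1)>0$ with $C_3\epsilon\le c_1-c_1^*$, and take $n$ large enough that $\epsilon n^{-\kappa}\le\delta_0$. On $\mathcal E_n:=\{\|\hat\bS-\bSigma\|_\infty\le\epsilon n^{-\kappa}\}\subseteq\mathcal A_n$, for each $i\in\cM_*$ pick $l^*(i)\in\cI_i^n$ attaining $\bmax_{l\in\cI_i^n}\rho_l^c$; then by the previous step and Assumption~3,
\[
\bmax_{l\in\cI_i^n}\hat\rho_l^c\ \ge\ \hat\rho_{l^*(i)}^c\ \ge\ \rho_{l^*(i)}^c-C_3\epsilon n^{-\kappa}\ \ge\ c_1 n^{-\kappa}-(c_1-c_1^*)n^{-\kappa}\ =\ c_1^* n^{-\kappa}.
\]
Hence $\mathcal E_n\subseteq\{\bmin_{i\in\cM_*}\bmax_{l\in\cI_i^n}\hat\rho_l^c\ge c_1^* n^{-\kappa}\}$, and by the concentration bound with Assumption~1 (so that $2\log d=O(n^\xi)=o(n^{1-2\kappa})$),
\[
\PP(\mathcal E_n^c)\ \le\ 2d^2\exp(-c_2\epsilon^2 n^{1-2\kappa})\ =\ \exp\!\big(2\log d-c_2\epsilon^2 n^{1-2\kappa}\big)\ \le\ \exp(-Cn^{1-2\kappa})
\]
for $n$ large and a suitable $C>0$, which is the first display. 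The second is then immediate: on $\mathcal E_n$ every $i\in\cM_*$ has $\bmax_{l\in\cI_i^n}\hat\rho_l^c\ge c_1^* n^{-\kappa}$, i.e.\ $i\in\hat\cM_{c_1^* n^{-\kappa}}$ (shrinking $c_1^*$ by an arbitrarily small amount if one wants the strict inequality in the definition of $\hat\cM_{t_n}$), so $\PP(\cM_*\subset\hat\cM_{c_1^* n^{-\kappa}})\ge\PP(\mathcal E_n)\ge 1-\exp(-Cn^{1-2\kappa})$.

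The main obstacle is the perturbation estimate of the second paragraph: making the bound $|\hat\rho_l^c-\rho_l^c|\lesssim\|\hat\bS-\bSigma\|_\infty$ hold \emph{uniformly} over all $\cC_p^k$ subsets (this is where Assumption~2 is essential, to keep every estimated block invertible with uniformly bounded inverse on the good event, so the PSD projection never triggers) and, crucially, \emph{linear} in $\|\hat\bS-\bSigma\|_\infty$. Linearity is what matters: Assumption~3 lower‑bounds $\rho_l^c$ itself — not its square — at rate $n^{-\kappa}$, and expanding $(\rho_l^c)^2=\bSigma_{\cI\times\cJ^l}\bSigma_{\cJ^l\times\cJ^l}^{-1}\bSigma_{\cI\times\cJ^l}^\top$ and dividing by $\hat\rho_l^c+\rho_l^c$ would only give the weaker $n^{\kappa}\|\hat\bS-\bSigma\|_\infty$ control, forcing the regime $\|\hat\bS-\bSigma\|_\infty\lesssim n^{-2\kappa}$ and a correspondingly weaker tail. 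Working directly with $\rho_l^c$ through its representation as the Euclidean norm of $\bSigma_{\cJ^l\times\cJ^l}^{-1/2}\bSigma_{\cI\times\cJ^l}^\top$, together with an operator‑Lipschitz bound for $M\mapsto M^{-1/2}$ on spectra bounded away from zero, is precisely what keeps the error at rate $n^{-\kappa}$ and hence yields the $\exp(-Cn^{1-2\kappa})$ probability consistent with Assumption~1.
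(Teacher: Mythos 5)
Your proof is correct, but it follows a genuinely different route from the paper's. The paper never perturbs the matrix inverse at all: it first uses Assumption 2 to get the deterministic upper bound $(\rho_l^c)^2\leq c_0\sum_{t=1}^k(\Sigma_{1,m_t^l})^2$, so that for each $i\in\cM_*$ the best set $l_i$ must contain a single coordinate with $|\Sigma_{1,m_1^{l_i}}|\geq c_1 n^{-\kappa}/\sqrt{kc_0}$; it then applies Hoeffding's inequality to that one Kendall's tau entry (union bound only over $\cM_*$); and it closes with the converse deterministic inequality $(\hat{\rho}_l^c)^2\geq\bmax_t(\hat{S}_{1,m_t^l})^2$, i.e.\ the estimated canonical correlation dominates any single estimated marginal correlation. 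That sandwich is more elementary than your argument, but its last step tacitly assumes $\hat{\bS}_{\cJ^l\times\cJ^l}$ is positive definite with unit diagonal (otherwise the domination inequality can fail, and after the PSD projection the diagonal need not be $1$) --- a point your construction of the good event $\mathcal{A}_n$ handles explicitly, since on it every block has $\lambda_{\min}\geq 1/c_0-k\delta_0>0$ and the projection never triggers. What your approach buys in exchange for the heavier operator-Lipschitz machinery is the uniform bound $\bmax_l|\hat{\rho}_l^c-\rho_l^c|\leq C_3\|\hat{\bS}-\bSigma\|_\infty$ over all $\cC_p^k$ subsets, which is essentially the estimate the paper has to re-derive (less carefully, via determinant and adjugate expansions and working with $(\rho_l^c)^2$ rather than $\rho_l^c$) in Steps I--II of the proof of Theorem \ref{theorem:2}; your remark that one must control $\rho_l^c$ linearly rather than $(\rho_l^c)^2$ to match the $n^{-\kappa}$ scale of Assumption 3 is a real point that the paper's later proofs gloss over. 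Both arguments yield the same $1-O(\exp(-Cn^{1-2\kappa}))$ rate.
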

Theorem \ref{theorem:1} shows that, by setting the threshold of order $c_1^*n^{-\kappa}$, all important variables can be selected out with probability tending to 1. However, the constant $c_1^*$ remains unknown. To  refine the theoretical result, we assume the following assumption holds.

\vspace{1em}

\textbf{Assumption 4} For some $0\leq \kappa\leq 1/2$, $\bmax_{i\notin \cM_*}\bmax_{l\in\cI_i^n}\rho_l^c< c_1^* n^{-\kappa}$.

\vspace{1em}
The Assumption  4 requires that if a variable $X_i$ is not important, then the canonical correlations between the response and all $k$ variables sets containing $X_i$ are all upper bounded by $c_1^* n^{-\kappa}$, and it uniformly holds for all  unimportant variables.
\begin{theorem}\label{theorem:2} Assume that Assumptions 1-4 hold, then  for some constants $c_1^*$ and $C$, we have

\begin{equation}\label{equ:theorem21}
\PP\left(\cM_*=\hat{\cM}_{c_1^* n^{-\kappa}}\right)\geq 1-O\left(\exp\left(-{Cn^{1-2\kappa}}\right)\right)
\end{equation}
and in particular
\begin{equation}\label{equ:theorem22}
\PP\left(|\hat{\cM}_{c_1^* n^{-\kappa}}|=s\right)\geq 1-O\left(\exp\left(-{Cn^{1-2\kappa}}\right)\right)
\end{equation}
where $s$ is the size of $\cM_*$.
\end{theorem}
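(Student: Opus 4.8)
The underfitting half of the statement is already available: Theorem~\ref{theorem:1} gives $\PP(\cM_*\subset\hat{\cM}_{c_1^* n^{-\kappa}})\ge 1-O(\exp(-Cn^{1-2\kappa}))$. So the plan is to establish the complementary inclusion with the same order of error probability, namely that with probability $1-O(\exp(-Cn^{1-2\kappa}))$ no unimportant covariate survives the threshold, i.e.\ $\bmax_{l\in\cI_i^n}\hat{\rho}_l^c\le c_1^* n^{-\kappa}$ for every $i\notin\cM_*$. Intersecting the two high-probability events gives $\hat{\cM}_{c_1^* n^{-\kappa}}=\cM_*$, which is (\ref{equ:theorem21}); the cardinality statement (\ref{equ:theorem22}) then follows at once since $|\cM_*|=s$.

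The workhorse is a uniform concentration bound for the estimated canonical correlations, essentially the same one underlying Theorem~\ref{theorem:1}: for any fixed $\varepsilon>0$ I would record
\[
\PP\Big(\bmax_{1\le l\le\cC_p^k}\big|\hat{\rho}_l^c-\rho_l^c\big|\ge\varepsilon\,n^{-\kappa}\Big)=O\big(\exp(-Cn^{1-2\kappa})\big),\qquad C=C(\varepsilon).
\]
This rests on three ingredients. First, the entrywise tail bound $\PP(|\hat{S}_{j,k}-\Sigma_{j,k}|\ge t)\le 2\exp(-cnt^2)$, obtained from Hoeffding's inequality for the $U$-statistic $\hat{\tau}_{j,k}$ (bounded kernel) together with the Lipschitz property of $\sin(\frac{\pi}{2}\,\cdot\,)$. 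Second, a union bound over the $O(d^2)$ entries and the $\cC_p^k=O(p^k)$ candidate index sets, which is absorbed by Assumption~1 since $\log p=O(n^\xi)$ with $\xi<1-2\kappa$. Third, and this is the genuinely delicate step, a deterministic perturbation inequality showing that $\rho_l^c$ — as a function of the $(k+1)\times(k+1)$ submatrix of $\bSigma$ indexed by $\cI\cup\cJ^l$ — is Lipschitz in the element-wise $\|\cdot\|_\infty$ norm, with Lipschitz constant controlled through $\lambda_{\max}((\bSigma_{\cJ^l\times\cJ^l})^{-1})\le c_0$ of Assumption~2, which stabilizes both the inverse and the square root in $\hat{\rho}_l^c=\sqrt{\hat{\bS}_{\cI\times\cJ}\hat{\bS}_{\cJ\times\cJ}^{-1}\hat{\bS}_{\cI\times\cJ}^\top}$. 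The subtle point inside this step is the positive-semidefinite projection: when $\hat{\bS}_{\cJ\times\cJ}$ is not positive definite one works with $\tilde{\bS}_{\cJ\times\cJ}$, and since $\bSigma_{\cJ\times\cJ}$ itself is a feasible positive semidefinite point one gets $\|\tilde{\bS}_{\cJ\times\cJ}-\bSigma_{\cJ\times\cJ}\|_\infty\le 2\|\hat{\bS}_{\cJ\times\cJ}-\bSigma_{\cJ\times\cJ}\|_\infty$, while on the good event the smallest eigenvalue of $\tilde{\bS}_{\cJ\times\cJ}$ remains bounded away from $0$, keeping the inverse under control.

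Given this bound the conclusion is elementary. By Assumption~3 the signal level for important covariates is $c_1 n^{-\kappa}$ and by Assumption~4 the level for unimportant ones is strictly below $c_1^* n^{-\kappa}$; here $c_1^*$ is the constant from Theorem~\ref{theorem:1}, which by construction lies strictly between these two levels, so I read Assumption~4 as $\bmax_{i\notin\cM_*}\bmax_{l\in\cI_i^n}\rho_l^c\le c_2 n^{-\kappa}$ with $c_2<c_1^*<c_1$. Pick $\varepsilon\in(0,\,c_1^*-c_2)$ and condition on $\mathcal{A}_n=\{\bmax_l|\hat{\rho}_l^c-\rho_l^c|<\varepsilon n^{-\kappa}\}$, whose complement has probability $O(\exp(-Cn^{1-2\kappa}))$. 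On $\mathcal{A}_n$, for every $i\notin\cM_*$,
\[
\bmax_{l\in\cI_i^n}\hat{\rho}_l^c \le \bmax_{l\in\cI_i^n}\rho_l^c+\varepsilon n^{-\kappa} < c_2 n^{-\kappa}+\varepsilon n^{-\kappa} \le c_1^* n^{-\kappa},
\]
so $i\notin\hat{\cM}_{c_1^* n^{-\kappa}}$; hence $\hat{\cM}_{c_1^* n^{-\kappa}}\subset\cM_*$ on $\mathcal{A}_n$. Combining with Theorem~\ref{theorem:1} proves (\ref{equ:theorem21}), and (\ref{equ:theorem22}) follows because $|\cM_*|=s$. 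The main obstacle throughout is the third ingredient above — the uniform deterministic control of $|\hat{\rho}_l^c-\rho_l^c|$ by $\|\hat{\bS}-\bSigma\|_\infty$ restricted to the $(k+1)\times(k+1)$ submatrix, with the positive-semidefinite projection step handled as indicated; once it is in place everything else parallels the proof of Theorem~\ref{theorem:1}.
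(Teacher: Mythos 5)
Your proposal is correct and follows essentially the same route as the paper: Theorem \ref{theorem:1} supplies the inclusion $\cM_*\subset\hat{\cM}_{c_1^*n^{-\kappa}}$, a uniform concentration bound for $\hat{\rho}_l^c$ around $\rho_l^c$ (entrywise Kendall's tau concentration via Hoeffding, union bound absorbed by Assumption 1, and control of the inverse via Assumption 2) combined with Assumption 4 rules out unimportant covariates, and intersecting the two events gives the result. The only cosmetic difference is that the paper organizes the concentration step through an intermediate quantity $(\tilde{\rho}_l^c)^2=\hat{\bS}_{\cI\times\cJ^l}\bSigma_{\cJ^l\times\cJ^l}^{-1}\hat{\bS}_{\cI\times\cJ^l}^\top$ rather than a single direct perturbation bound, and your explicit insertion of a gap between the noise level and $c_1^*$ in Assumption 4 is, if anything, slightly more careful than the paper's own treatment.
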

Theorem \ref{theorem:2} guarantees the exact sure screening property without any condition on $k_n$. Besides, the theorem guarantees the existence of $c_1^*$ and $C$, however, it still remains unknown how to select the constant $c_1^*$. If we know that $s<n\log n$ in advance, one can select a constant $c^*$ such that the size of $\hat{\cM}_{c^* n^{-\kappa}}$ is approximately $n$. Obviously, we have $\hat{\cM}_{c_1^* n^{-\kappa}}\subset \hat{\cM}_{c^* n^{-\kappa}}$ with probability tending to 1.  The following theorem is particularly useful in practice summarizing the above discussion.
\begin{theorem}\label{theorem:3} Assume that Assumptions 1-4 hold, if $s=|\cM_*|\leq n/\log n$, we have for any constant $\gamma>0$,
\[
\PP\left(\cM_*\subset\cM^\gamma\right)1-O\left(\exp\left(-{Cn^{1-2\kappa}}\right)\right),
\]
where $\cM^\gamma=\{1\leq i\leq p; \bmax_{l\in \cI_i^n} \hat{\rho}_l^c \  \text{is among the largest} \ \lfloor\gamma n\rfloor \ \text{of} \ \bmax_{l\in \cI_1^n} \hat{\rho}_l^c,\cdots, \bmax_{l\in \cI_p^n} \hat{\rho}_l^c\}$

\end{theorem}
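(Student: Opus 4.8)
The plan is to reduce the claim to Theorems \ref{theorem:1} and \ref{theorem:2}, which have already done the hard probabilistic work of controlling the estimation error in the canonical correlations. The key observation is that the event $\{\cM_*=\hat{\cM}_{c_1^* n^{-\kappa}}\}$ from \eqref{equ:theorem21}, together with the cardinality statement \eqref{equ:theorem22}, pins down both the membership and the size of the thresholded set on an event of probability $1-O(\exp(-Cn^{1-2\kappa}))$. On that event, $|\hat{\cM}_{c_1^* n^{-\kappa}}|=s\leq n/\log n\leq \lfloor\gamma n\rfloor$ for all $n$ large enough (since $n/\log n = o(\gamma n)$), so every index $i\in\cM_*$ has its score $\bmax_{l\in\cI_i^n}\hat{\rho}_l^c$ strictly exceeding $c_1^* n^{-\kappa}$, while by the complement argument underlying \eqref{equ:theorem21} every $i\notin\cM_*$ has score below $c_1^* n^{-\kappa}$. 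Hence the $s$ largest scores are exactly those attached to $\cM_*$, and since $s\leq\lfloor\gamma n\rfloor$, all of $\cM_*$ sits among the top $\lfloor\gamma n\rfloor$ scores, i.e. $\cM_*\subset\cM^\gamma$.

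So the proof I would write has three short steps. First, invoke Theorem \ref{theorem:2} to get an event $\Omega_n$ with $\PP(\Omega_n)\geq 1-O(\exp(-Cn^{1-2\kappa}))$ on which simultaneously $\cM_*=\hat{\cM}_{c_1^* n^{-\kappa}}$ and $|\hat{\cM}_{c_1^* n^{-\kappa}}|=s$. Second, on $\Omega_n$ establish the strict separation of scores across $\cM_*$ and its complement at the level $c_1^* n^{-\kappa}$ — for $i\in\cM_*$ this is immediate from $i\in\hat{\cM}_{c_1^* n^{-\kappa}}$ (definition of $\hat{\cM}_{t_n}$ gives $\bmax_{l\in\cI_i^n}\hat{\rho}_l^c > c_1^* n^{-\kappa}$), and for $i\notin\cM_*$ from $i\notin\hat{\cM}_{c_1^* n^{-\kappa}}$ (so $\bmax_{l\in\cI_i^n}\hat{\rho}_l^c \leq c_1^* n^{-\kappa}$). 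Third, use $s\leq n/\log n\leq\lfloor\gamma n\rfloor$ for $n$ large to conclude that the ordered scores have their top $\lfloor\gamma n\rfloor$ positions containing all $s$ indices of $\cM_*$, hence $\cM_*\subset\cM^\gamma$ on $\Omega_n$. Taking complements gives the stated probability bound.

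I do not expect a genuine obstacle here, since all the concentration machinery is already packaged in Theorems \ref{theorem:1}–\ref{theorem:2}; the only mild point of care is the ordering/tie-breaking in the definition of $\cM^\gamma$. If two scores coincide at the boundary rank $\lfloor\gamma n\rfloor$, one must check that the top-$\lfloor\gamma n\rfloor$ set can still be chosen to include all of $\cM_*$ — this is harmless because on $\Omega_n$ the $\cM_*$-scores are strictly larger than every non-$\cM_*$-score, so no tie ever occurs between an important and an unimportant variable, and since $s\leq\lfloor\gamma n\rfloor$ the whole block of $\cM_*$-scores fits below the cutoff rank regardless of how ties among unimportant variables are resolved. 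The other routine point is verifying $n/\log n\leq\lfloor\gamma n\rfloor$ eventually, which holds for every fixed $\gamma>0$ once $n$ is large enough that $\gamma n - 1 \geq n/\log n$; absorbing the finitely many small $n$ into the $O(\cdot)$ constant completes the argument. (I would also note in passing that the displayed inequality in the statement is missing a ``$\geq$'' before $1-O(\cdots)$, which the proof makes clear is the intended bound.)
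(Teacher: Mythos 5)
Your proposal is correct and follows exactly the route the paper intends: the paper gives no separate formal proof of Theorem \ref{theorem:3}, presenting it as an immediate consequence of Theorem \ref{theorem:2} together with the observation that $s\leq n/\log n\leq\lfloor\gamma n\rfloor$ for large $n$, which is precisely your three-step argument. Your added care about tie-breaking and the strict separation of scores at the threshold $c_1^*n^{-\kappa}$ only makes explicit what the paper leaves implicit.
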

The above theorem guarantees that one can reduce dimensionality to a moderate size against $n$ while ensuring the sure screening property, which further guarantees the validity of a more sophisticated and computationally efficient variable selection methods.

Theorem \ref{theorem:3} heavily relies on the Assumption 4. If there is natural order of the variables, and any important variable together with only the adjacent variables contributes to the response variable, then Assumption 4 can be totally removed while exserting an constraint on the parameter $k_n$. The following theorem summarizes the above discussion.

\begin{theorem}\label{theorem:4} Assume Assumptions 1-3 hold, $\lambda_{\max}(\bSigma)\leq c_2n^{\tau}$ for some $\tau\geq 0$ and $c_2>0$, and further assume $k_n=c_3n^{\tau^*}$ for some constants $c_3>0$ and $\tau^*\geq 0$. If $2\kappa+\tau+\tau^*<1$, then there exists some $\theta\in[0,1-2\kappa-\tau-\tau^*)$ such that for $\gamma=c_4n^{-\theta}$ with $c_4>0$, we have for some constant $C>0$,
\[
\PP\left(\cM_*\subset\cM^\gamma\right)\geq 1-O\left(\exp\left(-{Cn^{1-2\kappa}}\right)\right).
\]

\end{theorem}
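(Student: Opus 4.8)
\textbf{Proof plan for Theorem~\ref{theorem:4}.} Write $R_i=\bmax_{l\in\cI_i^n}\rho_l^c$ and $\widehat R_i=\bmax_{l\in\cI_i^n}\hat\rho_l^c$, so that $\cM^\gamma$ collects the indices carrying the $\lfloor\gamma n\rfloor$ largest values among $\widehat R_1,\dots,\widehat R_p$. Fix a sufficiently small constant $\varepsilon>0$ (pinned down in Step~(ii)) and set $\cB=\{1\le i\le p:\ R_i\ge\varepsilon n^{-\kappa}\}$, a \emph{deterministic} set. The plan has three steps: (i) bound $|\cB|$ deterministically by $O(n^{\tau+\tau^*+2\kappa})$ using Assumption~2 together with the spectral and localization hypotheses; (ii) exhibit a good event of probability $1-O(\exp(-Cn^{1-2\kappa}))$ on which every $i\in\cM_*$ outranks every $j\notin\cB$, so that $\cM_*$ sits among the $|\cB|$ largest $\widehat R$'s; (iii) choose $\theta$ so that $\lfloor\gamma n\rfloor\ge|\cB|$ for all large $n$, which forces $\cM_*\subset\cM^\gamma$.

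\textbf{Step (i): counting $\cB$.} By Assumption~2, for every $l$, $(\rho_l^c)^2=\bSigma_{\cI\times\cJ^l}\bSigma_{\cJ^l\times\cJ^l}^{-1}\bSigma_{\cI\times\cJ^l}^\top\le c_0\,|\bSigma_{\cI\times\cJ^l}|_2^2=c_0\sum_{j\in\cJ^l}\Sigma_{1,j}^2$. Every $l\in\cI_i^n$ uses only covariates within distance $k_n$ of $i$, so $\cJ^l\subset\{j:|j-i|\le k_n\}$ and hence $R_i^2\le c_0\sum_{|j-i|\le k_n}\Sigma_{1,j}^2$. Summing over $i$ and exchanging the order of summation, each column is counted at most $2k_n+1$ times, so
\[
\sum_{i=1}^p R_i^2\ \le\ c_0(2k_n+1)\sum_{j}\Sigma_{1,j}^2\ \le\ c_0(2k_n+1)(\bSigma^2)_{1,1}\ \le\ c_0(2k_n+1)\lambda_{\max}(\bSigma),
\]
where the last two bounds use $\bSigma\succeq0\Rightarrow\bSigma^2\preceq\lambda_{\max}(\bSigma)\bSigma$ and $\mathrm{diag}(\bSigma)=\Ib$. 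With $k_n=c_3 n^{\tau^*}$ and $\lambda_{\max}(\bSigma)\le c_2 n^{\tau}$ the right side is $O(n^{\tau+\tau^*})$, and summing $R_i^2\ge\varepsilon^2 n^{-2\kappa}$ over $i\in\cB$ gives $|\cB|\le c_5 n^{\tau+\tau^*+2\kappa}$ for a constant $c_5=c_5(\varepsilon,c_0,c_2,c_3)$. Note that the factor $\lambda_{\max}(\bSigma)$ (rather than $\lambda_{\max}(\bSigma)^2$) is what makes $\tau$, not $2\tau$, appear in the exponent.

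\textbf{Step (ii): the good event.} The estimate behind Theorem~\ref{theorem:1} is that, for any fixed $\delta>0$, $\PP(\|\hat\bS-\bSigma\|_\infty>\delta n^{-\kappa})\le O(\exp(-Cn^{1-2\kappa}))$: Hoeffding's inequality for the $U$-statistics $\hat\tau_{j,k}$ and the $\tfrac{\pi}{2}$-Lipschitz map $\sin(\tfrac{\pi}{2}\cdot)$ give a per-entry bound $\exp(-cn\delta^2 n^{-2\kappa})$, and Assumption~1 ($\log p=O(n^\xi)$, $\xi<1-2\kappa$) absorbs the $d^2$ union-bound factor. On $E_\delta=\{\|\hat\bS-\bSigma\|_\infty\le\delta n^{-\kappa}\}$ (which, for $n$ large, also forces $\hat\bS_{\cJ^l\times\cJ^l}$ to be positive definite, so the projection step is vacuous) one has, uniformly in $l$, the perturbation bound $\bigl|(\hat\rho_l^c)^2-(\rho_l^c)^2\bigr|\le C'(\rho_l^c\,\delta n^{-\kappa}+\delta^2 n^{-2\kappa})$; this follows from the resolvent identity for $\bSigma_{\cJ^l\times\cJ^l}^{-1}$, Assumption~2, the fixed dimension $k$, and the key fact $|\bSigma_{\cI\times\cJ^l}|_2^2\le\lambda_{\max}(\bSigma_{\cJ^l\times\cJ^l})(\rho_l^c)^2\le k(\rho_l^c)^2$, so that the leading correction scales with $\rho_l^c$. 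Hence for $j\notin\cB$ every $l\in\cI_j^n$ has $\rho_l^c<\varepsilon n^{-\kappa}$ and therefore $(\hat\rho_l^c)^2\le(\varepsilon^2+C'(\varepsilon\delta+\delta^2))n^{-2\kappa}<(c_1^*/2)^2 n^{-2\kappa}$ once $\varepsilon$ and then $\delta$ are taken small enough, i.e.\ $\widehat R_j<(c_1^*/2)n^{-\kappa}$ on $E_\delta$; meanwhile Theorem~\ref{theorem:1} gives $\widehat R_i\ge c_1^* n^{-\kappa}$ for all $i\in\cM_*$ on an event of probability $1-O(\exp(-Cn^{1-2\kappa}))$. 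Intersecting the two events, every $i\in\cM_*$ satisfies $\widehat R_i\ge c_1^* n^{-\kappa}>\widehat R_j$ for all $j\notin\cB$; consequently every index $j'$ with $\widehat R_{j'}\ge\widehat R_i$ lies in $\cB$, so each $i\in\cM_*$ has rank at most $|\cB|$ (the $\widehat R_i$ being continuous random variables, ties occur with probability $0$), i.e.\ $\cM_*$ is contained in the set of indices carrying the $|\cB|$ largest $\widehat R$'s.

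\textbf{Step (iii) and the main obstacle.} Since $2\kappa+\tau+\tau^*<1$, the interval $[0,1-2\kappa-\tau-\tau^*)$ is nonempty; fix any $\theta$ in it. Then $1-\theta>\tau+\tau^*+2\kappa$, so for every $c_4>0$ and all $n$ beyond some $n_0$ we have $\lfloor\gamma n\rfloor=\lfloor c_4 n^{1-\theta}\rfloor\ge c_5 n^{\tau+\tau^*+2\kappa}\ge|\cB|$, which on the good event makes $\cM^\gamma$ contain the indices of the $|\cB|$ largest $\widehat R_i$, hence contain $\cM_*$. Absorbing the finitely many $n\le n_0$ into the $O(\cdot)$ gives $\PP(\cM_*\subset\cM^\gamma)\ge 1-O(\exp(-Cn^{1-2\kappa}))$. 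The step I expect to be the crux is the refined perturbation bound in Step~(ii): a crude Lipschitz estimate --- treating $(\hat\rho_l^c)^2-(\rho_l^c)^2$ as $O(\|\hat\bS-\bSigma\|_\infty)$, or controlling $\hat\rho_l^c-\rho_l^c$ through the square root --- would only yield $\exp(-Cn^{1-4\kappa})$, so one genuinely needs that the cross-correlation vector $\bSigma_{\cI\times\cJ^l}$ has Euclidean norm of order $\rho_l^c$ itself. The deterministic count of Step~(i), the elementary ranking argument, and the exponent bookkeeping in Step~(iii) are otherwise routine.
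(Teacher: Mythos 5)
Your proof is correct, and in the counting step it takes a genuinely different route from the paper. Both arguments share the same core ingredients: the bound $(\rho_l^c)^2\le c_0\sum_{t}\Sigma_{1,m_t^l}^2$ from Assumption~2, the localization via $k_n$ so that each $\Sigma_{1,j}^2$ is counted $O(k_n)$ times, the spectral bound $\sum_j\Sigma_{1,j}^2=(\bSigma^2)_{1,1}\le\lambda_{\max}(\bSigma)$, Hoeffding's inequality for the Kendall's tau $U$-statistics, and Theorem~\ref{theorem:1} to keep $\bmax_{l\in\cI_i^n}\hat\rho_l^c\ge c_1^*n^{-\kappa}$ for $i\in\cM_*$. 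The difference is where the counting happens: the paper bounds the \emph{random} sum $\sum_{i=1}^p(\bmax_{l\in\cI_i^n}\tilde\rho_l^c)^2$ (with $\tilde\rho_l^c$ built from $\hat{\bS}_{\cI\times\cJ^l}$ but the true $\bSigma_{\cJ^l\times\cJ^l}^{-1}$) by $cn^{-1+\tau+\tau^*}p$ with high probability and then divides by the threshold $c_1^{*2}n^{-2\kappa}$, deferring the concentration of $\|\hat{\bS}_{\cI\times\cT}\|_2^2$ to an argument ``similar to Kong et al.''; you instead bound the \emph{deterministic} sum $\sum_i(\bmax_{l\in\cI_i^n}\rho_l^c)^2=O(n^{\tau+\tau^*})$, count the population-level set $\mathcal{B}$ of size $O(n^{2\kappa+\tau+\tau^*})$, and then use a uniform sup-norm deviation event plus a separation argument to show no index outside $\mathcal{B}$ can outrank an index of $\cM_*$. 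Your version buys something real: it avoids the accumulation of estimation error over $p$ coordinates (in the paper's route the entrywise errors contribute a term of order $k_n\,p\,n^{-2\kappa}$ to the random sum, which is not obviously dominated by $cn^{-1+\tau+\tau^*}p$ under the stated scaling, so the paper's Equation~(\ref{equ:theorem4proof2}) rests on the cited external argument), and it is self-contained while still delivering the $1-O(\exp(-Cn^{1-2\kappa}))$ rate. The price you pay is the refined perturbation bound in your Step~(ii) — the observation that $|\bSigma_{\cI\times\cJ^l}|_2^2\le\lambda_{\max}(\bSigma_{\cJ^l\times\cJ^l})(\rho_l^c)^2\le k(\rho_l^c)^2$ so that the leading correction to $(\hat\rho_l^c)^2$ scales with $\rho_l^c$ — and your diagnosis that a crude Lipschitz estimate would degrade the exponent to $n^{1-4\kappa}$ is accurate; the only cosmetic blemish is the parenthetical claim that ties occur with probability zero (Kendall's tau is discrete, so they can occur, but any tie-breaking convention leaves your ranking argument intact since the separation between $\cM_*$ and $\mathcal{B}^c$ is strict).
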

The assumption $k_n=c_3n^{\tau^*}$ is reasonable in many fields such as Biology and Finance. An intuitive example is in genomic association study, millions of genes tend to cluster together and functions together with adjacent genes.

The procedure by ranking the estimated canonical correlation and reducing the dimension in one step from a large $p$ to $\lfloor n/\log n\rfloor$
is a crude and greedy algorithm and may result in many fake covariates due to the strong correlations among them. Motivated by the ISIS method in \cite{Fan2008Sure}, we propose a similar iterative procedure which achieve sure screening in multiple steps.
The iterative procedure works as follows. Let the  shrinking factor $\delta\rightarrow0$ be properly chosen such that $\delta n^{1-2\kappa-\tau-\tau^*}\rightarrow \infty$ as $n\rightarrow \infty$ and we successively perform dimensionality reduction until the number of remaining variables drops
to below sample size $n$. In specific,  define a subset
\begin{equation}\label{equ:iterative}
\cM^1({\delta})=\left\{1\leq i\leq p:\bmax_{l\in \cI_i^n}\hat{\rho}_l^c\ \text{is among the largest} [\delta p] \ \text{of all}\right\}.
\end{equation}
In the first step we  select a subset of $\lfloor\delta p\rfloor$ variables, $\cM^1(\delta)$ by Equation (\ref{equ:iterative}). In the next step, we start from the variables indexed in $\cM^1(\delta)$, and apply a similar procedure as (\ref{equ:iterative}), and again obtain a sub-model $\cM^2(\delta)\subset\cM^1(\delta)$  with size $\lfloor\delta^2p\rfloor$. Iterate the steps above and finally  obtain a sub-model $\cM^k(\delta)$, with size $[\delta^k p]<n$.

\begin{theorem}\label{theorem:5} Assume that the conditions in Theorem \ref{theorem:4} hold, let $\delta\rightarrow0$ satisfying $\delta n^{1-2\kappa-\tau-\tau^*}\rightarrow \infty$ as $n\rightarrow \infty$, then we have
\[
\PP\left(\cM_*\subset \cM^k(\delta)\right)\geq 1-O\left(\exp\left(-{Cn^{1-2\kappa}}\right)\right).
\]
\end{theorem}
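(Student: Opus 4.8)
The plan is to bootstrap from Theorem~\ref{theorem:4}. Since the iterative rule does not alter the scores $g_i:=\bmax_{l\in\cI_i^n}\hat{\rho}_l^c$ but only shrinks the pool of indices over which ``the largest $\lfloor\delta^j p\rfloor$'' is taken, the models $\cM^1(\delta)\supset\cM^2(\delta)\supset\cdots\supset\cM^k(\delta)$ are nested and each $\cM^j(\delta)$ coincides (up to ties) with the set of the $\lfloor\delta^j p\rfloor$ indices $i$ having the largest $g_i$. In particular $\cM^k(\delta)$ is exactly a top-$\lfloor\delta^k p\rfloor$ set for the scores $g_i$, so it suffices to show that this set contains $\cM_*$ with probability $1-O(\exp(-Cn^{1-2\kappa}))$.

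The key step is to extract from the proofs of Theorems~\ref{theorem:1} and~\ref{theorem:4} a quantitative good event $\mathcal{E}$, with $\PP(\mathcal{E})\geq 1-O(\exp(-Cn^{1-2\kappa}))$, on which simultaneously (i) $g_i\geq c_1^* n^{-\kappa}$ for every $i\in\cM_*$, and (ii) $\bigl|\{1\leq i\leq p:\ g_i\geq c_1^* n^{-\kappa}\}\bigr|\leq C' n^{2\kappa+\tau+\tau^*}$ for an absolute constant $C'$. Part (i) is the estimation-error statement already underlying Theorem~\ref{theorem:1}; part (ii) is the counting bound behind the exponent condition $2\kappa+\tau+\tau^*<1$ of Theorem~\ref{theorem:4}: the unimportant indices whose empirical canonical correlation can reach the level $c_1^* n^{-\kappa}$ are confined to the $k_n=c_3 n^{\tau^*}$-neighbourhoods of the $s$ signals, which number $O(s\,n^{\tau^*})$, and the uniform control of $\|\hat{\bS}-\bSigma\|_\infty$ together with $\lambda_{\max}(\bSigma)\leq c_2 n^{\tau}$ inflates this by at most a further factor $O(n^{2\kappa+\tau})$. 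Hence, on $\mathcal{E}$, every top-$m$ set of the scores $g_i$ with $m\geq C' n^{2\kappa+\tau+\tau^*}$ contains $\cM_*$.

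It remains to compare sizes. The stopping rule forces $\lfloor\delta^{k-1}p\rfloor\geq n$, hence
\[
\lfloor\delta^k p\rfloor\ \geq\ \delta\,\lfloor\delta^{k-1}p\rfloor-1\ \geq\ \delta n-1\ =\ \bigl(\delta\,n^{1-2\kappa-\tau-\tau^*}\bigr)\,n^{2\kappa+\tau+\tau^*}-1\ \geq\ C' n^{2\kappa+\tau+\tau^*}
\]
for all large $n$, because $\delta\,n^{1-2\kappa-\tau-\tau^*}\to\infty$ by hypothesis. Combining this with the last sentence of the previous paragraph gives $\cM_*\subset\cM^k(\delta)$ on $\mathcal{E}$, which is the claim.

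I expect the genuine obstacle to be part (ii): Theorem~\ref{theorem:4} as stated only delivers a fixed power $\gamma=c_4 n^{-\theta}$ with $\theta$ strictly below $1-2\kappa-\tau-\tau^*$, and a fixed such $\theta$ is not in general enough to be beaten by the surviving size $\lfloor\delta^k p\rfloor\geq\delta n$ under only ``$\delta\,n^{1-2\kappa-\tau-\tau^*}\to\infty$''; one has to reopen the concentration argument for $\hat{\bS}$ and verify that the number of indices whose score crosses the signal level is genuinely $O(n^{2\kappa+\tau+\tau^*})$ with no extra $n^{\epsilon}$ slack. A minor additional point arises if one prefers the variant in which $\hat{\rho}_l^c$ is re-evaluated within the current model at each stage: one must then check that restricting the admissible $k$-variable sets to $\cM^{j-1}(\delta)$ preserves the lower bound of Assumption~3 for the signals, which again follows from the $k_n=c_3 n^{\tau^*}$-neighbourhood structure, and costs only a $k=O(\log p)=O(n^{\xi})$ union bound that is absorbed into $\exp(-Cn^{1-2\kappa})$.
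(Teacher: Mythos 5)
Your architecture is genuinely different from the paper's. The paper handles the iteration by the Fan--Lv induction: the counting bound it establishes in the proof of Theorem \ref{theorem:4}, namely that the number of indices $i$ with $\bmax_{l\in\cI_i^n}\tilde{\rho}_l^c$ above the minimal signal score is at most $cp/n^{1-2\kappa-\tau-\tau^*}$, is proportional to the \emph{current} dimension (it comes from $\sum_{i=1}^p(\bmax_{l\in\cI_i^n}\tilde{\rho}_l^c)^2\leq cn^{-1+\tau+\tau^*}p$), so one recomputes the scores on each reduced model, checks that the count at stage $j$ is below $\delta\lfloor\delta^{j-1}p\rfloor$ because $\delta n^{1-2\kappa-\tau-\tau^*}\to\infty$, and union-bounds over the $O(\log p)=O(n^{\xi})$ stages. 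You instead freeze the scores, note that $\cM^k(\delta)$ is then a top-$\lfloor\delta^k p\rfloor$ set with $\lfloor\delta^kp\rfloor\geq\delta n-1\gg n^{2\kappa+\tau+\tau^*}$, and replace the dimension-proportional count by a $p$-free count $O(n^{2\kappa+\tau+\tau^*})$. If that count holds, your argument is cleaner (no induction, no recomputation); but be aware that under the frozen-score reading the whole iteration collapses to a single step, whereas the paper's appeal to ISIS indicates the scores are meant to be re-evaluated on each reduced set, in which case you do need the additional check you mention at the end (that the coordinate $m_1^{l_i}$ carrying the marginal correlation $\geq c_1^*n^{-\kappa}$ for each signal survives every stage; it does, since its own score exceeds the threshold).

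The one real gap is your part (ii), and the justification you give for it is not correct: nothing in the assumptions confines the high-scoring null indices to $k_n$-neighbourhoods of the $s$ signals, because $\bSigma$ is not assumed banded and a null variable arbitrarily far in index from every signal can have $|\Sigma_{1,j}|$ of order one. The correct mechanism is the one you only gesture at in your second clause. On the event $\|\hat{\bS}-\bSigma\|_\infty\leq c'n^{-\kappa}$, which has probability $1-O(p^2\exp(-Cn^{1-2\kappa}))=1-O(\exp(-C'n^{1-2\kappa}))$ since $\log p=O(n^{\xi})$ with $\xi<1-2\kappa$, Assumption 2 gives $(\hat{\rho}_l^c)^2\leq 2c_0\sum_{t}\Sigma_{1,m_t^l}^2+O(n^{-2\kappa})$ with a small constant in the error term, so $g_i\geq c_1^*n^{-\kappa}$ forces some coordinate $j$ of some $l\in\cI_i^n$ to satisfy $|\Sigma_{1,j}|\geq c''n^{-\kappa}$. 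The number of such $j$ is at most $n^{2\kappa}c''^{-2}\sum_j\Sigma_{1,j}^2\leq n^{2\kappa}c''^{-2}\lambda_{\max}(\bSigma)=O(n^{2\kappa+\tau})$, using $\bSigma^2\preceq\lambda_{\max}(\bSigma)\bSigma$ and $\Sigma_{11}=1$; and by the definition of $\cI_i^n$ each such $j$ can appear in a $k$-variable set attached to at most $2k_n+1=O(n^{\tau^*})$ indices $i$. This delivers exactly your $O(n^{2\kappa+\tau+\tau^*})$ with no $n^{\epsilon}$ slack, so the obstacle you flagged is surmountable; but as written the counting step is asserted with a wrong reason rather than proved, and it is precisely the step where your route does work that the paper's cited induction does not already supply.
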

The above theorem guarantees the sure screening property of the iterative procedure and the step size $\delta$ can be chosen in the same way as ISIS in \cite{Fan2008Sure}.

\section{Simulation Study}

In this section we conduct thorough numerical simulation to illustrate the empirical performance of the proposed doubly robust screening procedure (denoted as CCH). Besides, we compare the proposed procedure with three methods, the method proposed by \cite{Kong2017Sure} (denoted as CCK), the rank correlation screening approach proposed by \cite{Li2012Robust} (denoted as RRCS) and the initially proposed SIS procedure by \cite{Fan2008Sure}. To illustrate the doubly robustness of the proposed procedure, we consider the following five models which includes linear regression with thick-tail covariates and  error term, single-index model with thick-tail error term, additive model and more general regression model.

\vspace{0.5em}

\textbf{Model 1} Linear model setting adapted from \cite{Kong2017Sure}: $Y_i=0.9+\beta_1X_{i1}+\cdots+\beta_p X_{ip}+\epsilon_i$, where $\bbeta=(1,-0.5,0,0,\ldots,0)^\top$ with the last $p-2$ components being 0. The covariates $\bX$ is sampled from multivariate normal $N(\zero, \bSigma)$ or multivariate $t$ with degree 1,  noncentrality parameter $\zero$ and scale matrix $\bSigma$. The diagonal entries of $\bSigma$ are 1 and the off-diagonal entries are $\rho$, the error term $\epsilon$ is independent of $\bX$ and generated from the standard normal distribution or the standard $t$-distribution with degree 1.

\vspace{0.5em}

\textbf{Model 2} Linear model setting adapted from \cite{Li2012Robust}: $Y_i=\beta_1X_{i1}+\cdots+\beta_p X_{ip}+\epsilon_i$, where $\bbeta=(5,5,5,0,\ldots,0)^\top$ with the last $p-3$ components being 0. The covariates $\bX$ is sampled from multivariate normal $N(\zero, \bSigma)$ or multivariate $t$ with degree 1,  noncentrality parameter $\zero$ and scale matrix $\bSigma$. The diagonal entries of $\bSigma$ are 1 and the off-diagonal entries are $\rho$, the error term $\epsilon$ is independent of $\bX$ and generated from the standard normal distribution or the standard $t$-distribution with degree 1.
\vspace{0.5em}

\textbf{Model 3} Single-index model setting:
\[
H(Y)=\bX^\top\bbeta+\epsilon.
\]
We set $H(Y)=\log (Y)$ which corresponds to a special case of BOX-COX transformation $({|Y|^\lambda\text{sgn}(Y)-1})/{\lambda}$ with $\lambda=1$. The error term $\epsilon$ is independent of $\bX$ and generated from the standard normal distribution or the standard $t$-distribution with degree 3. The regression coefficients $\bbeta=(3,1.5,2,0,\ldots,0)^\top$ with the last $p-3$ components being 0. The covariates $\bX$ is sampled from multivariate normal $N(\zero, \bSigma)$ or multivariate $t$ with degree 3, where  the diagonal entries of $\bSigma$ are 1 and the off-diagonal entries are $\rho$.
\vspace{0.5em}

\textbf{Model 4} Additive model from \cite{Meier2009HIGH}:
\[
Y_i=5f_1(X_{1i})+3f_2(X_{2i})+4f_3(X_{3i})+6f_4(X_{4i})+\epsilon_i,
\]
with
\[
f_1(x)=x, \ \ f_2(x)=(2x-1)^2, \ \ f_3(x)=\frac{\sin(2\pi x)}{2-\sin(2\pi x)}
\]
and
\[
\begin{split}
f_4(x)=&0.1\sin(2\pi x)+0.2\cos(2\pi x)+0.3\sin^2(2\pi x)\\
&+0.4\cos^3(2\pi x)+0.5\sin^3(2\pi x)
\end{split}
\]
The covariates $\bX=(X_1,\ldots,X_p)^\top$ are generated by
\[
X_j=\frac{W_j+tU}{1+t}, \ \ j=1,\ldots,p,
\]
where $W_1,\ldots,W_p$ and $U$ are i.i.d. Uniform[0,1]. For $t=0$, this is the
independent uniform case while  $t=1$ corresponds to a design with correlation
0.5 between all covariates. The error term $\epsilon$ are sampled from $N(0,1.74)$. The regression coefficients $\bbeta$ in the model setting is obviously $(3,1.5,2,0,\ldots,0)^\top$ with the last $p-4$ components being 0.

\vspace{0.5em}

\textbf{Model 5} A model generated by combining Model 3 and Model 4:
\[
H(Y_i)=5f_1(X_{1i})+3f_2(X_{2i})+4f_3(X_{3i})+6f_4(X_{4i})+\epsilon_i,
\]
where $H(Y)$ is the same with Model 3 and the functions $\{f_1,f_2,f_3,f_4\}$ are the same  with Model 4. The covariates $\bX=(X_1,\ldots,X_p)^\top$ are generated in the same way as in Model 4. The error term $\epsilon$ are sampled from $N(0,1.74)$.
\vspace{0.5em}

\begin{table}[!h]
  \caption{The proportions of containing Model 4 and Model 5 in the active set}
  \label{table:1}
  \renewcommand{\arraystretch}{1.5}
  \centering
  \selectfont
  \begin{threeparttable}
    \begin{tabular*}{12cm}{cccccccccccccccccccc}
    \toprule[2pt]
    &\multirow{3}{*}{$(p,n)$}&\multirow{3}{*}{Method}&\multicolumn{3}{c}{Model 4}&\multicolumn{3}{c}{Model 5} \\
    \cmidrule(lr){4-6} \cmidrule(lr){7-9}
    &       &     &$t=0$      &$t=0.5$  &$t=1$    &$t=0$    &$t=0.5$  &$t=1$  \\
    \midrule[1pt]
&(100,20)   &RRCS  &0.038	&0.046	&0.132	&0.038	&0.046	&0.132    \\
&           &SIS   &0.048	&0.056	&0.142	&0.002	&0.012	&0.032    \\
&           &CCH1  &0.938	&0.968	&0.938	&0.938	&0.968	&0.938    \\
&           &CCK1  &0.380	&0.356	&0.584  &0.142	&0.146	&0.300      \\
&           &CCH2  &0.978	&0.956	&0.938	&0.978	&0.956	&0.938    \\
&           &CCK2  &0.506	&0.458	&0.682	&0.166	&0.144	&0.302    \\
\midrule[1pt]
&(100,50)   &RRCS  &0.420	&0.352	&0.504	&0.420	&0.352	&0.504   \\
&           &SIS   &0.512	&0.406	&0.496	&0.100	&0.110	&0.182    \\
&           &CCH1  &1   	&1  	&1  	&1  	&1  	&1       \\
&           &CCK1  &0.922	&0.938	&0.990	&0.690	&0.614	&0.836   \\
&           &CCH2  &1   	&1  	&1  	&1  	&1  	&1       \\
&           &CCK2  &0.988	&0.990	&0.996	&0.674	&0.588	&0.810    \\
\midrule[1pt]
&(500,20)   &RRCS  &0.002	&0.004	&0.020	&0.002	&0.004	&0.020     \\
&           &SIS   &0.002	&0.008	&0.026	&0  	&0  	&0.008    \\
&           &CCH1  &0.804	&0.884	&0.812	&0.804	&0.884	&0.812    \\
&           &CCK1  &0.168	&0.176	&0.354	&0.038	&0.068	&0.168    \\
&           &CCH2  &0.858	&0.890	&0.842	&0.858	&0.890	&0.842    \\
&           &CCK2  &0.222	&0.224	&0.454	&0.052	&0.086	&0.178    \\
\midrule[1pt]
&(500,50)   &RRCS  &0.098	&0.092	&0.228	&0.098	&0.092	&0.228    \\
&           &SIS   &0.158	&0.140	&0.222	&0.002	&0.026	&0.036     \\
&           &CCH1  &1   	&1  	&1  	&1  	&1  	&1         \\
&           &CCK1  &0.794	&0.856	&0.984	&0.310	&0.344	&0.642     \\
&           &CCH2  &1   	&1  	&1  	&1  	&1  	&1          \\
&           &CCK2  &0.930	&0.956	&0.998	&0.350	&0.344	&0.622      \\

  \bottomrule[2pt]
  \end{tabular*}
  \end{threeparttable}
\end{table}

For models in which $\rho$ involved, we take $\rho=0,0.1,0.5,0.9$. For all the models, we consider four combinations of $(n,p)$: (20,100), (50,100), (20,500), (50, 500). All simulations results are based on the 500 replications. We evaluate the performance of different screening procedures by the proportion that the true model is included in the selected active set in 500 replications. To guarantee a fair comparison, for all the screening procedures, we choose the variables whose coefficients rank in the first largest $\lfloor n/\log n\rfloor$ values. For our method CCH and the method CCK in \cite{Kong2017Sure}, two parameters $k_n$ and $k$ are involved. The simulation study shows that when $k_n$ is small, the performance for different combination of $(k_n,k)$ are quite similar. Thus we only presents the results of $(k_n,k)=(2,2), (2,3)$ for illustration, which are denoted as CCH1, CCH2 for our method and CCK1 and CCK2 for the method by \cite{Kong2017Sure}.

From the simulation results, we can see that the proposed CCH methods detects the true model much more accurately than SIS, RRCS and CCK meothods in almost all cases. In specific, for the motivating Model 1 in \cite{Kong2017Sure}, from Table \ref{table:2}, we can see that when the correlations among covariates become large, all the SIS, RRCS and CCK meothods perform worse (the proportion of containing the true model drops sharply), but the proposed CCH procedure shows robustness against the correlations among covariates and detects the true model for each replication. Besides, for the heavy tailed error term following $t(1)$, we can see that all the SIS, RRCS and CCK meothods perform very bad while the CCH method still works very well. For Model 2, from Table \ref{table:3}, we can see that when the covariates are multivariate normal and the error term is normal, then all the methods works well when the sample size is relatively large while CCK and CCH requires less sample size compared with RRCS and SIS. If the error term is from $t(1)$, then SIS, RRCS and CCK meothods perform bad especially when the ratio $p/n$ is large. In contrast, the CCH approach still performs very well. We should notice that the RRCS also shows certain robustness and CCK2 is slightly better than CCK1 because the important covariates are indexed by three consecutive integers.

The CCH's advantage over the CCK is mainly illustrated by the results of Model 3 to Model 5. In fact, Model 3 is an example of single index model, Model 4 is an example of additive model and Model 5 is an example of more complex nonlinear regression model. CCK approach relies heavily on the linear regression assumption while CCH is more applicable. For the single index regression model, from Table \ref{table:4}, we can see that CCK performs badly especially when the ratio $p/n$ is large. The approach RRCS ranks the Kenall' tau correlation which is invariant to monotone transformations, thus it exhibits robustness for Model 3, but it still performs much worse than CCH. For the additive regression model and Model 5, by Table \ref{table:1}, similar conclusions can be drawn as discussed for Model 3. It is worth mentioning that although we require the marginal transformation functions are monotone in theory, but simulation study shows that the proposed screening procedure is not sensitive to the requirement, and performs pretty well even the transformation functions are not monotone. In fact, the marginal transformation functions $f_2,f_3,f_4$ in Model 4 and Model 5 are all not monotone. In one word, the proposed CCH procedure performs very well not only for heavy tailed error terms, but also for various unknown transformation functions, which shows doubly robustness. Thus in practice, CCH can be used as a safe replacement of the CCK, RRCS or SIS.

\begin{figure}[!h]
  \centering
  \includegraphics[width=16cm, height=14cm]{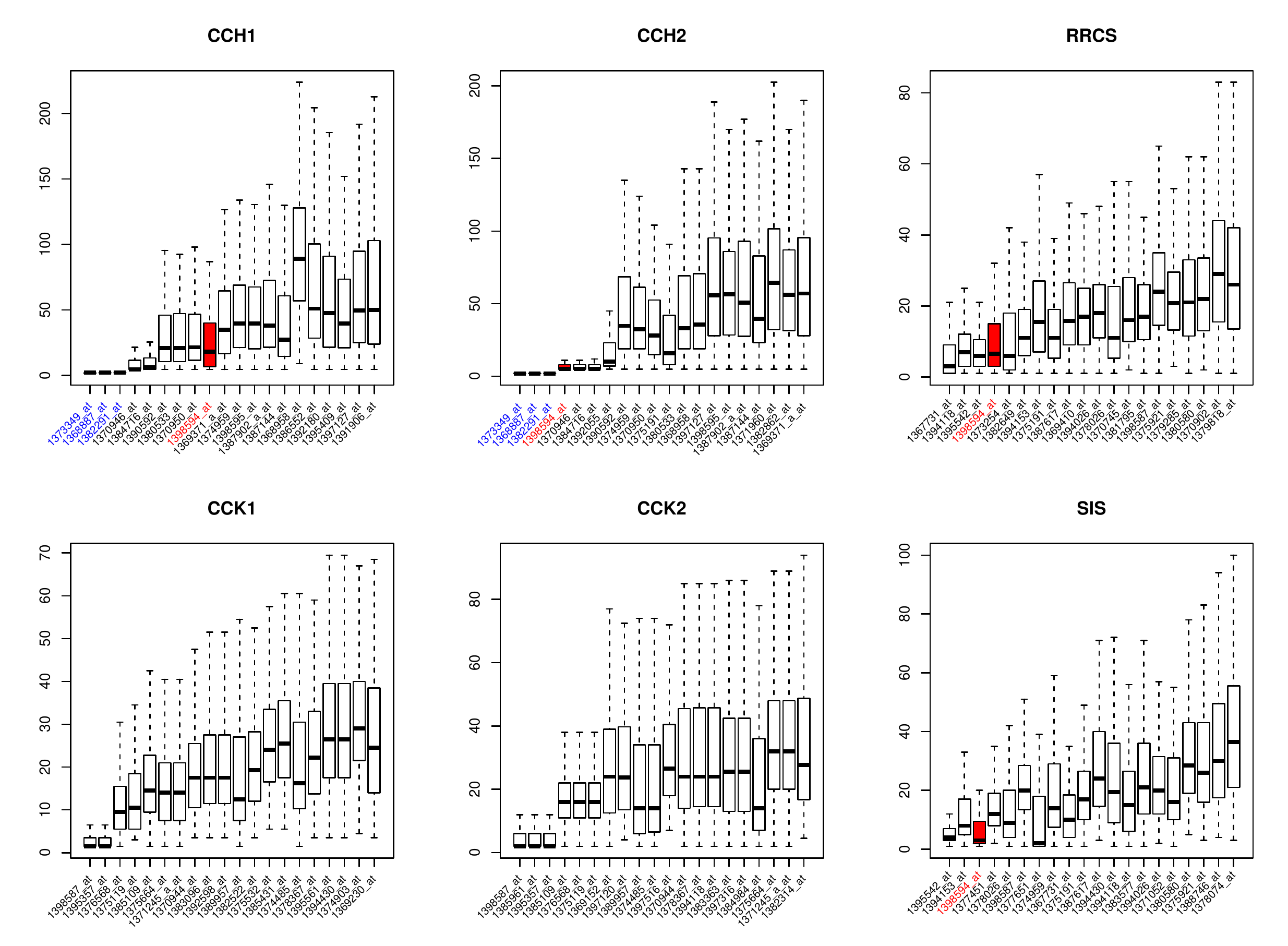}
  \caption{Boxplot of the ranks of the first 20 genes ordered by $\hat{r}_j^U$.}\label{fig:1}
\end{figure}

\begin{figure}[!h]
  \centering
  \includegraphics[width=16cm, height=6.2cm]{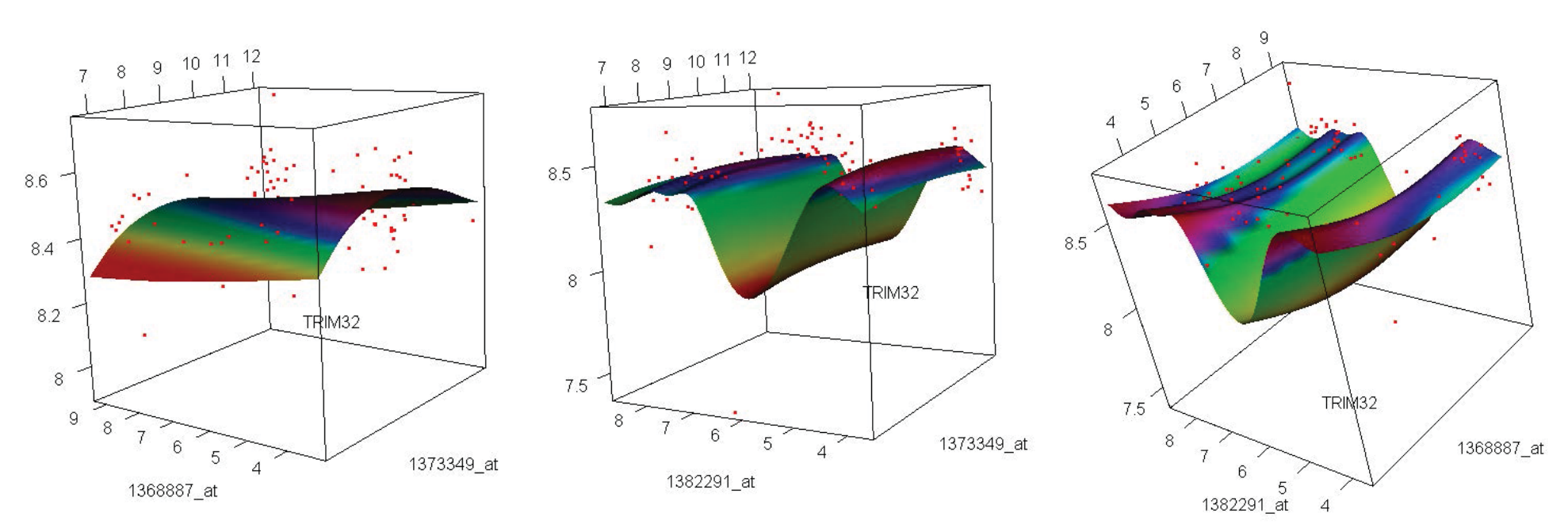}
  \caption{3-dimensional plots of variables with  Genralized Additive Model (GLM) fits.}\label{fig:2}
\end{figure}

\section{Real Example}
In this section we apply the variable selection method to a gene expression data set for an eQTL experiment in rat eye reported in \cite{scheetz2006regulation}. The data set has ever been analyzed by \cite{huang2010variable}, \cite{sun2012scaled} and \cite{fan2012nonparametric} and can be downloaded from the Gene Expression Omnibus at accession number  GSE5680.

For this data set, 120 12-week-old male rats were selected for harvesting of tissue from
the eyes and subsequent microarray analysis. The microarrays used to analyze the RNA from the eyes of the rats contain over 31,042 different probe sets (Affymetric GeneChip Rat Genome 230 2.0 Array). The intensity values were normalized using the  robust multi-chip averaging method \citep{boldstad2003comparison,irizarry2003exploration} to obtain summary expression values for each probe set. Gene expression levels were analyzed on a logarithmic
scale. Similar to the work of \cite{huang2010variable} and \cite{fan2012nonparametric}, we are still interested in finding the genes correlated with gene TRIM32, which was found to cause Bardet¨CBiedl syndrome \citep{chiang2006homozygosity}.  Bardet¨CBiedl syndrome is a genetically heterogeneous
disease of multiple organ systems, including the retina.
Of more than 31,000 gene probes  including $>$28,000 rat genes represented on the Affymetrix expression microarray, only 18,976 exhibited sufficient signal for reliable analysis and at least 2-fold variation in expression among 120 male rats generated from an SR/JrHsd $\times$ SHRSP intercross.  The probe from TRIM32 is 1389163$\_$at,
which is one of the 18, 976 probes that are sufficiently expressed and variable. The sample size is $n=120$ and the number of probes is 18,975. It's expected that only a few genes are related to TRIM32 such that this is a sparse high dimensional regression problem.

\begin{sidewaystable}[h]
  \caption{The proportions of containing Model 1 in the active set}
  \label{table:2}
  \renewcommand{\arraystretch}{1.5}
  \centering
  \fontsize{6.5}{8}\selectfont
  \begin{threeparttable}
    \begin{tabular*}{20cm}{cccccccccccccccccccccccc}
    \toprule[2pt]
    &\multirow{3}{*}{(p,n)}&\multirow{3}{*}{Method}&\multicolumn{8}{c}{Multivariate Normal Covariates}&\multicolumn{8}{c}{Multivariate $t$ Covariates (degree 1)}  \\
    \cmidrule(lr){4-11} \cmidrule(lr){12-19}
    &&&\multicolumn{4}{c}{$\epsilon$$\sim$N(0,1)}&\multicolumn{4}{c}{$\epsilon$$\sim$t(1)}&\multicolumn{4}{c}{$\epsilon$$\sim$N(0,1)}&\multicolumn{4}{c}{$\epsilon$$\sim$t(1)}\cr
    \cmidrule(lr){4-7} \cmidrule(lr){8-11} \cmidrule(lr){12-15} \cmidrule(lr){16-19}
    &&&$\rho$=0 &$\rho$=0.1 &$\rho$=0.5 &$\rho$=0.9 &$\rho$=0 &$\rho$=0.1 &$\rho$=0.5 &$\rho$=0.9&$\rho$=0 &$\rho$=0.1 &$\rho$=0.5 &$\rho$=0.9 &$\rho$=0 &$\rho$=0.1 &$\rho$=0.5 &$\rho$=0.9 \\
    \midrule[1pt]
&(100,20)    &RRCS    &0.002	&0	    &0.006	&0.012	&0.006	&0.006	&0.010	&0.006	&0.002	&0	    &0.006	&0.014	&0.006	&0.004	&0.008	&0.018     \\
&            &SIS     &0.002	&0.002	&0.004	&0.008	&0.008	&0.008	&0.012	&0.012	&0.008	&0.006	&0.002	&0.006	&0.010	&0.008	&0.008	&0.008     \\
&            &CCH1    &1	    &1	    &1	    &1	    &1	    &1	    &1	    &1	    &1	    &1	    &1	    &1	    &1	    &1	    &1	    &1         \\
&            &CCK1    &0.674	&0.652	&0.428	&0.082	&0.134	&0.112	&0.070	&0.030	&0.988	&0.988	&0.942	&0.622	&0.462	&0.436	&0.342	&0.130      \\
&            &CCH2    &1	    &1	    &1	    &1	    &1	    &1	    &1	    &1	    &1	    &1	    &1	    &1	    &1	    &1	    &1	    &1         \\
&            &CCK2    &0.646	&0.622	&0.396	&0.070	&0.126	&0.104	&0.074	&0.038	&0.980	&0.976	&0.912	&0.518	&0.396	&0.368	&0.300	&0.098     \\
\midrule[1pt]
&(100,50)    &RRCS    &0	    &0	    &0	    &0.014	&0.004	&0.004	&0.006	&0.028	&0	    &0	    &0	    &0.012	&0.002	&0	    &0.006	&0.016    \\
&            &SIS     &0	    &0	    &0	    &0.018	&0.012	&0.014	&0.016	&0.020	&0.020	&0.016	&0.020	&0.026	&0.024	&0.022	&0.024	&0.020      \\
&            &CCH1    &1	    &1	    &1	    &1	    &1	    &1	    &1	    &1   	&1	    &1	    &1	    &1	    &1	    &1	    &1	    &1         \\
&            &CCK1    &0.996	&0.996	&0.920	&0.386	&0.208	&0.216	&0.158	&0.076	&1	    &1	    &1	    &0.990	&0.790	&0.778	&0.738	&0.490       \\
&            &CCH2    &1	    &1	    &1	    &1	    &1	    &1	    &1	    &1	    &1	    &1	    &1	    &1	    &1	    &1	    &1	    &1          \\
&            &CCK2    &0.998	&1	    &0.914	&0.352	&0.216	&0.208	&0.152	&0.084	&1   	&1	    &1   	&0.986	&0.770	&0.752	&0.684	&0.432       \\
\midrule[1pt]
&(500,20)    &RRCS    &0	    &0	    &0	    &0      &0	    &0	    &0	    &0	    &0	    &0	    &0	    &0   	&0.002	&0.004	&0.002	&0.002     \\
&            &SIS     &0	    &0	    &0	    &0      &0	    &0	    &0	    &0	    &0	    &0	    &0	    &0	    &0.008	&0.006	&0.004	&0.002     \\
&            &CCH1    &1	    &1	    &1	    &1      &1	    &1   	&1	    &1  	&1  	&1  	&1  	&1  	&1  	&1   	&1  	&1          \\
&            &CCK1    &0.450	&0.420	&0.252	&0.022	&0.022	&0.022	&0.008	&0.002	&0.950	&0.928	&0.854	&0.364	&0.326	&0.306	&0.212	&0.050       \\
&            &CCH2    &1	    &1	    &1	    &1	    &1	    &1  	&1  	&1  	&1  	&1  	&1  	&1  	&1  	&1  	&1  	&1           \\
&            &CCK2    &0.420	&0.378	&0.174	&0.016	&0.024	&0.018	&0.010	&0.002	&0.916	&0.892	&0.760	&0.254	&0.236	&0.228	&0.154	&0.032      \\
\midrule[1pt]
&(500,50)    &RRCS    &0.002	&0.002	&0  	&0.002	&0  	&0   	&0  	&0.002  &0  	&0  	&0  	&0  	&0  	&0  	&0  	&0        \\
&            &SIS     &0.002	&0.002	&0.002	&0.002	&0  	&0.002	&0.002	&0.004	&0.002	&0  	&0  	&0.002	&0  	&0  	&0.002	&0.004    \\
&            &CCH1    &1    	&1  	&1  	&1  	&1  	&1  	&1  	&1  	&1  	&1  	&1  	&1  	&1  	&1  	&1  	&1         \\
&            &CCK1    &0.982	&0.974	&0.838	&0.170	&0.074	&0.070	&0.042	&0.014	&1  	&1  	&1  	&0.988	&0.624	&0.602	&0.524	&0.234     \\
&            &CCH2    &1    	&1  	&1  	&1  	&1  	&1  	&1  	&1  	&1  	&1  	&1  	&1  	&1  	&1  	&1  	&1         \\
&            &CCK2    &0.976	&0.966	&0.804	&0.156	&0.068	&0.064	&0.042	&0.010	&1  	&1  	&1  	&0.966	&0.552	&0.546	&0.456	&0.176     \\
  \bottomrule[2pt]
  \end{tabular*}
  \end{threeparttable}
\end{sidewaystable}

\begin{sidewaystable}[!ht]
  \caption{The proportions of containing Model 2 in the active set}
  \label{table:3}
  \renewcommand{\arraystretch}{1.5}
  \centering
  \fontsize{6.5}{8}\selectfont
  \begin{threeparttable}
    \begin{tabular*}{20cm}{cccccccccccccccccccccccc}
    \toprule[2pt]
    &\multirow{3}{*}{(p,n)}&\multirow{3}{*}{Method}&\multicolumn{8}{c}{Multivariate Normal Covariates}&\multicolumn{8}{c}{Multivariate $t$ Covariates (degree 1)} \\
    \cmidrule(lr){4-11} \cmidrule(lr){12-19}
    &&&\multicolumn{4}{c}{$\epsilon$$\sim$N(0,1)}&\multicolumn{4}{c}{$\epsilon$$\sim$t(1)}&\multicolumn{4}{c}{$\epsilon$$\sim$N(0,1)}&\multicolumn{4}{c}{$\epsilon$$\sim$t(1)}\cr
    \cmidrule(lr){4-7} \cmidrule(lr){8-11} \cmidrule(lr){12-15} \cmidrule(lr){16-19}
    &&&$\rho$=0 &$\rho$=0.1 &$\rho$=0.5 &$\rho$=0.9 &$\rho$=0 &$\rho$=0.1 &$\rho$=0.5 &$\rho$=0.9&$\rho$=0 &$\rho$=0.1 &$\rho$=0.5 &$\rho$=0.9 &$\rho$=0 &$\rho$=0.1 &$\rho$=0.5 &$\rho$=0.9 \\
\midrule[1pt]
&(100,20)    &RRCS  &0.532	&0.440	&0.220	&0.092	&0.378	&0.294	&0.154	&0.030	&0.230	&0.158	&0.084	&0.060	&0.178	&0.128	&0.068	&0.032    \\
&            &SIS   &0.632	&0.520	&0.340	&0.280	&0.332	&0.282	&0.166	&0.070	&0.046	&0.048	&0.032	&0.060	&0.040	&0.040	&0.028	&0.036    \\
&            &CCH1  &1	    &1  	&1  	&1  	&1  	&1  	&1  	&1  	&1  	&1  	&1  	&1  	&1  	&1  	&1  	&1        \\
&            &CCK1  &0.996	&0.998	&0.992	&0.988	&0.674	&0.702	&0.624	&0.370	&0.496	&0.520	&0.546	&0.572	&0.384	&0.442	&0.426	&0.334    \\
&            &CCH2  &1	    &1  	&1  	&1  	&1  	&1  	&1  	&1  	&1  	&1  	&1  	&1  	&1  	&1  	&1  	&1        \\
&            &CCK2  &1   	&1  	&1  	&1  	&0.752	&0.768	&0.728	&0.444	&1  	&1  	&1  	&1  	&0.882	&0.894	&0.870	&0.726    \\
\midrule[1pt]
&(100,50)    &RRCS  &0.998	&0.986	&0.904	&0.744	&0.976	&0.942	&0.780	&0.452	&0.918	&0.842	&0.532	&0.302	&0.868	&0.782	&0.446	&0.230     \\
&            &SIS   &1  	&0.994	&0.956	&0.942	&0.694	&0.662	&0.550	&0.262	&0.112	&0.108	&0.090	&0.136	&0.106	&0.096	&0.088	&0.120     \\
&            &CCH1  &1  	&1  	&1  	&1  	&1  	&1  	&1  	&1  	&1  	&1  	&1  	&1  	&1  	&1  	&1  	&1        \\
&            &CCK1  &1  	&1  	&1  	&1  	&0.796	&0.800	&0.770	&0.532	&0.732	&0.738	&0.772	&0.790	&0.654	&0.692	&0.696	&0.680     \\
&            &CCH2  &1  	&1  	&1  	&1  	&1  	&1  	&1  	&1  	&1  	&1  	&1  	&1  	&1  	&1  	&1  	&1        \\
&            &CCK2  &1  	&1  	&1  	&1  	&0.818	&0.830	&0.802	&0.586	&1  	&1  	&1  	&1  	&0.952	&0.952	&0.942	&0.906    \\
\midrule[1pt]
&(500,20)    &RRCS  &0.156	&0.100	&0.032	&0.010	&0.074	&0.048	&0.008	&0.002	&0.020	&0.022	&0.006	&0  	&0.018	&0.014	&0.002	&0         \\
&            &SIS   &0.238	&0.162	&0.078	&0.038	&0.092	&0.064	&0.026	&0.004	&0.008	&0.006	&0.004	&0.006	&0.006	&0.006	&0.006	&0.004    \\
&            &CCH1  &1  	&1  	&1  	&1  	&1  	&1  	&1  	&1  	&1  	&1  	&1  	&1  	&1  	&1  	&1  	&1         \\
&            &CCK1  &0.958	&0.972	&0.946	&0.906	&0.506	&0.548	&0.468	&0.188	&0.316	&0.354	&0.368	&0.370	&0.216	&0.292	&0.270	&0.194     \\
&            &CCH2  &1  	&1  	&1  	&1  	&1  	&1  	&1  	&1  	&1  	&1  	&1  	&1  	&1  	&1  	&1  	&1         \\
&            &CCK2  &1  	&1  	&1  	&1  	&0.672	&0.684	&0.616	&0.314	&1  	&1  	&1  	&1  	&0.866	&0.868	&0.832	&0.652      \\
\midrule[1pt]
&(500,50)    &RRCS  &0.954	&0.900	&0.624	&0.370	&0.852	&0.740	&0.400	&0.086	&0.666	&0.482	&0.186	&0.070	&0.592	&0.414	&0.158	&0.032     \\
&            &SIS   &0.972	&0.938	&0.786	&0.672	&0.490	&0.424	&0.246	&0.062	&0.016	&0.014	&0.010	&0.022	&0.016	&0.014	&0.010	&0.022     \\
&            &CCH1  &1  	&1  	&1  	&1  	&1  	&1  	&1  	&1  	&1  	&1  	&1  	&1  	&1  	&1  	&1  	&1         \\
&            &CCK1  &1  	&1  	&1  	&1  	&0.702	&0.738	&0.660	&0.362  &0.488	&0.550	&0.568	&0.614	&0.406	&0.484	&0.480 	&0.426     \\
&            &CCH2  &1  	&1  	&1  	&1  	&1  	&1  	&1  	&1  	&1  	&1  	&1  	&1  	&1  	&1  	&1  	&1         \\
&            &CCK2  &1  	&1  	&1  	&1  	&0.762	&0.768	&0.706	&0.426	&1  	&1  	&1  	&1  	&0.934	&0.936	&0.924	&0.850      \\

  \bottomrule[2pt]
  \end{tabular*}
  \end{threeparttable}
\end{sidewaystable}

\begin{sidewaystable}[!ht]
  \caption{The proportions of containing Model 3 in the active set}
  \label{table:4}
  \renewcommand{\arraystretch}{1.5}
  \centering
  \fontsize{6.5}{8.5}
  \selectfont
  \begin{threeparttable}
    \begin{tabular*}{20cm}{cccccccccccccccccccccccc}
    \toprule[2pt]
    &\multirow{3}{*}{(p,n)}&\multirow{3}{*}{Method}&\multicolumn{8}{c}{Multivariate Normal Covariates}&\multicolumn{8}{c}{Multivariate $t$ Covariates (degree 3)}\cr
    \cmidrule(lr){4-11} \cmidrule(lr){12-19}
    &&&\multicolumn{4}{c}{$\epsilon$$\sim$N(0,1)}&\multicolumn{4}{c}{$\epsilon$$\sim$t(3)}&\multicolumn{4}{c}{$\epsilon$$\sim$N(0,1)}&\multicolumn{4}{c}{$\epsilon$$\sim$t(3)}\cr
    \cmidrule(lr){4-7} \cmidrule(lr){8-11} \cmidrule(lr){12-15} \cmidrule(lr){16-19}
    &&&$\rho$=0 &$\rho$=0.1 &$\rho$=0.5 &$\rho$=0.9 &$\rho$=0 &$\rho$=0.1 &$\rho$=0.5 &$\rho$=0.9&$\rho$=0 &$\rho$=0.1 &$\rho$=0.5 &$\rho$=0.9 &$\rho$=0 &$\rho$=0.1 &$\rho$=0.5 &$\rho$=0.9 \\
    \midrule[1pt]
&(100,20)   &RRCS    &0.302	&0.268	&0.130	&0.048	&0.286	&0.216	&0.108	&0.028	&0.242	&0.182	&0.090	&0.054	&0.214	&0.162	&0.074	&0.028    \\
&           &SIS     &0.030	&0.010	&0.006	&0  	&0.010	&0.002	&0.002	&0  	&0.004	&0.002	&0.002	&0  	&0.006	&0.004	&0.004	&0        \\
&           &CCH1    &1  	&1  	&1  	&1  	&1  	&1  	&1  	&1  	&1  	&1      &1  	&1  	&1  	&1  	&1  	&1        \\
&           &CCK1    &0.166	&0.154	&0.074	&0.014	&0.156	&0.116	&0.050	&0.018	&0.064	&0.064	&0.048	&0.016	&0.072	&0.056	&0.034	&0.008     \\
&           &CCH2    &1  	&1  	&1  	&1  	&1  	&1  	&1  	&1  	&1  	&1  	&1  	&1  	&1  	&1  	&1  	&1         \\
&           &CCK2    &0.302	&0.258	&0.086	&0.038	&0.236	&0.170	&0.064	&0.034	&0.134	&0.104	&0.062	&0.040	&0.128	&0.102	&0.050	&0.034     \\
\midrule[1pt]
&(100,50)   &RRCS    &0.910	&0.874	&0.692	&0.486	&0.894	&0.840	&0.642	&0.370	&0.844	&0.776	&0.568	&0.388	&0.834	&0.754	&0.554	&0.322      \\
&           &SIS     &0.284	&0.204	&0.032	&0.006	&0.194	&0.126	&0.026	&0.010	&0.076	&0.048	&0.014	&0.006	&0.064	&0.046	&0.010	&0.004      \\
&           &CCH1    &1  	&1  	&1  	&1  	&1  	&1  	&1  	&1  	&1  	&1  	&1  	&1  	&1  	&1  	&1  	&1          \\
&           &CCK1    &0.726	&0.674	&0.280	&0.070	&0.590	&0.588	&0.266	&0.076	&0.230	&0.232	&0.114	&0.038	&0.236	&0.222	&0.132	&0.040        \\
&           &CCH2    &1  	&1  	&1  	&1  	&1  	&1  	&1  	&1  	&1  	&1  	&1  	&1  	&1  	&1  	&1  	&1          \\
&           &CCK2    &0.872	&0.790	&0.304	&0.098	&0.726	&0.698	&0.280	&0.094	&0.310	&0.284	&0.122	&0.054	&0.314	&0.286	&0.136	&0.066      \\
\midrule[1pt]
&(500,20)   &RRCS    &0.064	&0.040	&0.012	&0.002	&0.060	&0.032	&0.012	&0.002	&0.040	&0.020	&0.012	&0.004	&0.036	&0.012	&0  	&0.004     \\
&           &SIS     &0  	&0  	&0  	&0  	&0  	&0  	&0  	&0  	&0  	&0  	&0  	&0  	&0  	&0  	&0   	&0          \\
&           &CCH1    &1  	&1  	&1  	&1  	&1  	&1  	&1  	&1  	&1  	&1  	&1  	&1  	&1  	&1  	&1  	&1         \\
&           &CCK1    &0.024	&0.034	&0.014	&0  	&0.042	&0.028	&0.022	&0  	&0.026	&0.024	&0.012	&0.004	&0.028	&0.014	&0.010	&0.004       \\
&           &CCH2    &1  	&1  	&1  	&1  	&1  	&1  	&1  	&1  	&1  	&1  	&1  	&1  	&1  	&1  	&1  	&1          \\
&           &CCK2    &0.090	&0.072	&0.022	&0.006	&0.076	&0.070	&0.014	&0.008	&0.042	&0.024	&0.010	&0.004	&0.038	&0.028	&0.012	&0.006      \\
\midrule[1pt]                                                                                                                                           &(500,50)   &RRCS    &0.676	&0.552	&0.292	&0.106	&0.616	&0.508	&0.252	&0.068	&0.548	&0.448	&0.194	&0.088	&0.516	&0.392	&0.164	&0.040      \\
&           &SIS     &0.026	&0.012	&0  	&0  	&0.026	&0.014	&0.002	&0  	&0.002	&0  	&0  	&0  	&0  	&0  	&0  	&0           \\
&           &CCH1    &1  	&1  	&1  	&1  	&1  	&1  	&1  	&1  	&1  	&1  	&1  	&1  	&1  	&1  	&1  	&1           \\
&           &CCK1    &0.334	&0.294	&0.068	&0.004	&0.268	&0.210	&0.056	&0.002	&0.048	&0.058	&0.020	&0.006	&0.054	&0.050	&0.022	&0.004       \\
&           &CCH2    &1  	&1  	&1  	&1  	&1  	&1  	&1  	&1  	&1  	&1  	&1  	&1  	&1  	&1  	&1  	&1            \\
&           &CCK2    &0.542	&0.438	&0.106	&0.018	&0.420	&0.334	&0.064	&0.012	&0.092	&0.076	&0.028	&0.014	&0.096	&0.078	&0.026	&0.012        \\

  \bottomrule[2pt]
  \end{tabular*}
  \end{threeparttable}
\end{sidewaystable}

Direct application of the proposed approach on the whole dataset is slow, thus we select 500 probes with the largest variances of the whole 18,975 probes.  \cite{huang2010variable} proposed nonparametric additive model to capture the relationship between expression of  TRIM32 and candidates genes and find most of the plots of the estimated additive components are highly nonlinear, thus confirming the  necessity of taking into account nonlinearity. The Elliptical Copula Regression (ECR) model can also capture the nonlinear relationship and thus it is reasonable to apply the proposed doubly robust dimension reduction procedure on this data set.

For the real data example, we compare the selected genes by procedures introduced in the simulation study, which are the SIS (\cite{Fan2008Sure}), the RRCS procedure (\cite{Li2012Robust}), CCK procedure (\cite{Kong2017Sure}) and the proposed CCH procedure. To detect influential genes, we adopt the bootstrap procedure similar to \cite{Li2012Robust,Kong2017Sure}. We denote the respective correlation coefficients calculated using the SIS, RRCS, CCK, CCH by $\tilde{\rho}_{sis},\tilde{\rho}_{rrcs},\tilde{\rho}_{cck}$ and $\tilde{\rho}_{cch}$. The detailed algorithm is presented in Algorithm \ref{alg:first}.

\begin{algorithm}[!ht]
\caption{A bootstrap procedure to obtain influential genes}\label{alg:first}
{\bf Input:} $\cD=\{(\bX_i,Y_i),i=1,\ldots,n\}$\\
{\bf Output:} Index of influential genes \\
\begin{algorithmic}[1]
\State By the data set $\{(\bX_i,Y_i),i=1,\ldots,n\}$, calculate the correlations coefficients $\tilde{\rho}_{sis}^i,\tilde{\rho}_{rrcs}^i,\tilde{\rho}_{cck}^i$ and $\tilde{\rho}_{cch}^i$ and then order them as $\tilde{\rho}^{(\hat{j}_1)}\ge \tilde{\rho}^{(\hat{j}_2)}\geq\cdots\ge \tilde{\rho}^{(\hat{j}_p)}$, where $\tilde{\rho}$ can be $\tilde{\rho}_{sis},\tilde{\rho}_{rrcs},\tilde{\rho}_{cck}$ and $\tilde{\rho}_{cch}$, thus the set $\{\hat{j}_1,\cdots,\hat{j}_p\}$ varies with different screening procedure. We denote by $\hat{j}_1\succeq \cdots\succeq\hat{j}_p$ to represent an empirical ranking of the component indices of $\bX$ based on the contributions to the response, i.e., $s\succeq t$ indicates $\tilde{\rho}^{(s)}\geq\tilde{\rho}^{(t)}$ and we informally interpret as `` the $s$th component of $\bX$ has at least as much influence on the response as the $t$th component. The ranking $\hat{r}_j$ of the $j$th component is defined to be the value of $r$ such that $\hat{j}_r=j$.
\State For each $1\leq i\leq p$, employ the SIS, RRCS, CCK and CCH procedures to calculate the $b$th bootstrap version of  $\tilde{\rho}^{i}$, denotes as $\tilde{\rho}^{i}_b, b=1,\ldots,200$.
\State Denote the ranks of $\tilde{\rho}_b^{1},\ldots,\tilde{\rho}_b^{p}$ by $\hat{j}^1_b\succeq\cdots\hat{j}^p_b$ and calculate the corresponding rank $\hat{r}_j^b$ for the $j$th component of $\bX$.
\State Given a value $\alpha=0.05$, compute the $(1-\alpha)$ level, two-sides and equally tailed interval for the rank of the $j$th component, i.e., an interval $[\hat{r}_j^L,\hat{r}_j^U]$ where
\[
\PP(\hat{r}_j^b\leq \hat{r}_j^L|\cD)\approx\PP(\hat{r}_j^b\ge \hat{r}_j^U|\cD)\approx\frac{\alpha}{2}.
\]
\State Treat a variable as influential if $\hat{r}_j^U$ ranks in the top 20 positions.
\end{algorithmic}
\end{algorithm}

The box-plot of the ranks of influential genes is illustrated in Figure \ref{fig:1}, from which we can see that the proposed CCH procedure selects out three very influential genes $1373349\_{at}$, $1368887\_{at}$ and $1382291\_{at}$ (emphasized in Figure \ref{fig:1} by blue color), which were not detected as  influential by the other screening methods. The reason we selects out the three influential genes is that there exists strong nonlinearity relationship between the response and the combination of the three covariates genes. Figure \ref{fig:2} illustrate the above findings. Besides, gene $1398594\_at$ is detected as influential by CCH and RRCS procedure, which is also emphasized by red colour in Figure \ref{fig:1}.  By scatter plot, we find the nonlinearity between gene $1398594\_at$  and TRIM 32 gene is obvious and  CCH and RRCS procedure can both capture the nonlinear relationship.

The above findings are just based on statistical analysis, which need to be further validated by experiments in labs. The screening procedure is particularly helpful by narrowing down the number of research targets to a few top ranked genes from the 500 candidates.

\section{Discussion}
We propose a very flexible semi-parametric ECR model and consider the variable selection problem for ECR model in the ultra-high dimensional setting.  We propose a doubly robust sure screening procedure for ECR model  Theoretical analysis shows that the procedure enjoys sure screening property, i.e., with probability tending to 1, the screening procedure  selects out all important variables and substantially reduces the dimensionality to a moderate size against the sample size. We set $k_n$ to be a small value and it performs well as long as there is a natural index for all the covariates such that the neighboring covariates are correlated. If there is no natural index group in prior, we can do statistical clustering for the variables before screening. The performance of the screening procedure then would rely heavily on the clustering performance, which we leave as a future research topic.
\vspace{3em}

\section*{Appendix: Proof of Main Theorems}

\begin{appendices}

First we introduce a useful lemma which is critical for the proof of the main results.

\begin{lemma}\label{lemma:1} For any $c>0$ and some positive constant $C>0$, we have
\[
\PP\left(\left|\hat{S}_{s,t}-\Sigma_{s,t}\right|\geq cn^{-\kappa}\right)=O\left(\exp(-Cn^{1-2\kappa})\right).
\]
\end{lemma}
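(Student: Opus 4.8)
The plan is to establish the tail bound by passing from the estimator $\hat S_{s,t}=\sin(\tfrac{\pi}{2}\hat\tau_{s,t})$ to the underlying sample Kendall's tau $\hat\tau_{s,t}$, using the population identity $\Sigma_{s,t}=\sin(\tfrac{\pi}{2}\tau_{s,t})$ valid for the elliptical distribution. Since $x\mapsto\sin(\tfrac{\pi}{2}x)$ is Lipschitz on $[-1,1]$ with constant $\tfrac{\pi}{2}$, we have $|\hat S_{s,t}-\Sigma_{s,t}|=|\sin(\tfrac{\pi}{2}\hat\tau_{s,t})-\sin(\tfrac{\pi}{2}\tau_{s,t})|\leq\tfrac{\pi}{2}|\hat\tau_{s,t}-\tau_{s,t}|$. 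Hence it suffices to prove the concentration inequality $\PP(|\hat\tau_{s,t}-\tau_{s,t}|\geq c' n^{-\kappa})=O(\exp(-Cn^{1-2\kappa}))$ for an appropriate constant $c'=2c/\pi$, and the invariance of Kendall's tau under strictly monotone transformations lets us compute it from the latent elliptical vector $\tilde{\bZ}$ rather than from the unobserved marginal functions.

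The core step is the concentration bound for $\hat\tau_{s,t}$. Here I would observe that $\hat\tau_{s,t}$ is a $U$-statistic of order $2$ with bounded kernel $h(\tilde{\bZ}_i,\tilde{\bZ}_{i'})=\mathrm{sign}\{(\tilde Z_{is}-\tilde Z_{i's})(\tilde Z_{it}-\tilde Z_{i't})\}$ taking values in $\{-1,0,1\}$, and $\EE\hat\tau_{s,t}=\tau_{s,t}$. I would then invoke Hoeffding's inequality for $U$-statistics (the standard decomposition of a $U$-statistic into an average of $\lfloor n/2\rfloor$ i.i.d. blocks, combined with the usual Hoeffding bound for bounded i.i.d. averages), which yields
\[
\PP\left(\left|\hat\tau_{s,t}-\tau_{s,t}\right|\geq t\right)\leq 2\exp\left(-\frac{\lfloor n/2\rfloor\, t^2}{2}\right)
\]
since the kernel has range at most $2$. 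Taking $t=c' n^{-\kappa}$ gives the exponent $-\lfloor n/2\rfloor (c')^2 n^{-2\kappa}/2\asymp -C n^{1-2\kappa}$, which is exactly the claimed rate, and it holds uniformly in $(s,t)$ because the bound does not depend on the pair. Chaining this back through the Lipschitz step with $c'=2c/\pi$ completes the argument.

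I do not expect a serious obstacle here; the only points requiring a little care are (i) confirming that the population relation $\Sigma_{s,t}=\sin(\tfrac{\pi}{2}\tau_{s,t})$ does hold for general continuous elliptical distributions with $\mathrm{Pr}(\zeta=0)=0$ (this is standard — it follows because the sign of $(\tilde Z_{is}-\tilde Z_{i's},\tilde Z_{it}-\tilde Z_{i't})$ depends only on the pair through a bivariate elliptical, hence only through $\Sigma_{s,t}$, and matches the Gaussian computation); and (ii) making sure the $U$-statistic Hoeffding bound is applied with the right effective sample size $\lfloor n/2\rfloor$, which only affects the constant $C$ and not the exponent $n^{1-2\kappa}$. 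Everything else is a direct substitution.
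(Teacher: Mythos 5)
Your proposal is correct and follows essentially the same route as the paper: a Lipschitz bound $|\sin(\tfrac{\pi}{2}\hat\tau_{s,t})-\sin(\tfrac{\pi}{2}\tau_{s,t})|\leq\tfrac{\pi}{2}|\hat\tau_{s,t}-\tau_{s,t}|$ to reduce to a deviation bound for the sample Kendall's tau, followed by Hoeffding's inequality for a bounded order-two $U$-statistic with $t=cn^{-\kappa}$. Your added remarks on the effective sample size $\lfloor n/2\rfloor$ and on verifying the identity $\Sigma_{s,t}=\sin(\tfrac{\pi}{2}\tau_{s,t})$ for elliptical distributions are sound but only affect the constant $C$; the paper takes both for granted.
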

\begin{proof} Recall that $\hat{S}_{s,t}=\sin(\frac{\pi}{2}\hat{\tau}_{s,t})$, then we have
\[
\begin{array}{lll}
  \PP\left(|\hat{S}_{s,t}-\Sigma_{s,t}|>t\right)
&=&\PP\left(\left|\sin(\frac{\pi}{2}\hat{\tau}_{s,t})-\sin(\frac{\pi}{2}{\tau}_{s,t})\right|\geq t\right)\\
&\leq & \PP\left(\left|\hat{\tau}_{s,t}-{\tau}_{s,t}\right|\geq \frac{2}{\pi}t\right)\\
&\leq & \PP\left(\left|\hat{\tau}_{s,t}-{\tau}_{s,t}\right|\geq \frac{2}{\pi}t\right)
\end{array}
\]
Since $\hat{\tau}_{s,t}$ can be written in the form of U-statistic with a kernel bounded between -1 and 1, by Hoeffding's inequality, we have that
\[
\PP\left(\left|\hat{\tau}_{s,t}-{\tau}_{s,t}\right|\geq \frac{2}{\pi}t\right)\leq 2\exp\left(-\frac{2}{\pi^2}nt^2\right).
\]
By taking $t=cn^{-\kappa}$, we then have
\[
\PP\left(\left|\hat{S}_{s,t}-\Sigma_{s,t}\right|\geq cn^{-\kappa}\right)\leq 2\exp\left(-\frac{2c^2}{\pi^2}n^{1-2\kappa}\right),
\]
which concludes the Lemma.
\end{proof}

\section*{Proof of Theorem \ref{theorem:1}}
\begin{proof}
By the definition of CC,
\[
(\rho^c_l)^2=\bSigma_{\cI\times\cJ^l}\bSigma_{\cJ^l\times\cJ^l}^{-1}\bSigma_{\cI\times\cJ^l}^\top.
\]
By Assumption 1, $\lambda_{\max}({\bSigma_{\cJ^l\times\cJ^l}})\leq c_0$, and note that $\bSigma_{\cI\times\cJ^l}=(\Sigma_{1,m_1^l},\ldots,\Sigma_{1,m_k^l})$ is a row vector, we have that

\begin{equation}\label{equ:property3}
(\rho_{l}^c)^2\leq c_0\sum_{t=1}^k(\Sigma_{1,m^{l}_t})^2.
\end{equation}

By Assumption 2, if $i\in \cM_*$, then there exists $l_i\in \cI_i^n$ such that $\rho_{l_i}^c\geq c_1n^{-\kappa}$.
Without loss of generality, we assume that
\[
|\Sigma_{1,m_1^{l_i}}|=\bmax_{1\leq t\leq k}\large|\Sigma_{1,m_t^{l_i}}\large|.
\]
By Equation (\ref{equ:property3}), we have that $|\Sigma_{1,m_1^{l_i}}|\geq c_1^*n^{-\kappa}$ for some $c_1^*>0$. For $\Sigma_{1,m_1^{l_i}}$, we denote the corresponding Kendall' tau estimator as $\hat{S}_{1,m_1^{l_i}}\in \hat{\bS}_{\cI\times\cJ^{l_i}}$, then we have the following result:

\begin{equation}\label{equ:11}
\PP\left(\bmin_{i\in\cM_*}|\hat{S}_{1,m_1^{l_i}}|\geq \frac{c_1^*}{2}n^{-\kappa}\right)\geq \PP\left(\bmin_{i\in\cM_*}\left|\hat{S}_{1,m_1^{l_i}}-\Sigma_{1,m_1^{l_i}}\right|\leq \frac{c_1^*}{2}n^{-\kappa}\right).
\end{equation}
Furthermore, we have
\[
\begin{array}{llll}
& \PP\left(\bmin_{i\in\cM_*}\left|\hat{S}_{1,m_1^{l_i}}-\Sigma_{1,m_1^{l_i}}\right|\geq \frac{c_1^*}{2}n^{-\kappa}\right)  \\
\leq & s*\PP\left(\left|\hat{S}_{1,m_1^{l_i}}-\Sigma_{1,m_1^{l_i}}\right|\geq \frac{c_1^*}{2}n^{-\kappa}\right)\\
=& s*\PP\left(\left|\sin(\frac{\pi}{2}\hat{\tau}_{1,m_1^{l_i}})-\sin(\frac{\pi}{2}{\tau}_{1,m_1^{l_i}})\right|\geq \frac{c_1^*}{2}n^{-\kappa}\right)\\
\leq & s*\PP\left(\left|\hat{\tau}_{1,m_1^{l_i}}-{\tau}_{1,m_1^{l_i}}\right|\geq \frac{c_1^*}{\pi}n^{-\kappa}\right)\\
\leq & p*\PP\left(\left|\hat{\tau}_{1,m_1^{l_i}}-{\tau}_{1,m_1^{l_i}}\right|\geq \frac{c_1^*}{\pi}n^{-\kappa}\right)
\end{array}
\]
Since $\hat{\tau}_{1,m_1^{l_i}}$ can be written in the form of U-statistic with a kernel bounded between -1 and 1, by Hoeffding's inequality, we have that
\[
\PP\left(\left|\hat{\tau}_{1,m_1^{l_i}}-{\tau}_{1,m_1^{l_i}}\right|\geq \frac{c_1^*}{\pi}n^{-\kappa}\right)\leq 2\exp\left(-\frac{c_1^{*2}}{2\pi^2}n^{1-2\kappa}\right).
\]
By Assumption 3, $\log(p)=o(n^{1-2\kappa})$, we further have that for some constant $C$,
\[
\PP\left(\bmin_{i\in\cM_*}\left|\hat{S}_{1,m_1^{l_i}}-\Sigma_{1,m_1^{l_i}}\right|\geq {c_1^*}n^{-\kappa}\right)\leq 2\exp\left(-Cn^{1-2\kappa}\right).
\]
Combining with Equation (\ref{equ:11}), we have
\[
\PP\left(\bmin_{i\in\cM_*}|\hat{S}_{1,m_1^{l_i}}|\geq {c_1^*}n^{-\kappa}\right)\geq 1-2\exp\left(-Cn^{1-2\kappa}\right).
\]
Besides, it is easy to show that $(\hat{\rho^c_l})^2\geq \bmax_{1\leq t\leq k}(\hat{S}_{1,m_t^l})^2$, and hence,
\[
\PP\left(\bmin_{i\in \cM_*}\bmax_{l\in\cI_i^n}\hat{\rho}_l^c\geq c_1^* n^\kappa\right)\geq 1-2\exp\left(-Cn^{1-2\kappa}\right),
\]
which concludes
\[
\PP\left(\bmin_{i\in \cM_*}\bmax_{l\in\cI_i^n}\hat{\rho}_l^c\geq c_1^* n^\kappa\right)\geq 1-O\left(\exp\left(-{Cn^{1-2\kappa}}\right)\right).
\]
The above result further implies that
\[
\PP\left(\cM_*\subset\hat{\cM}_{c_1^* n^\kappa}\right)\geq 1-O\left(\exp\left(-{Cn^{1-2\kappa}}\right)\right).
\]
\end{proof}

\section*{Proof of Theorem \ref{theorem:2}}

\begin{proof}
The proof of Theorem \ref{theorem:2} is split into the following 2 steps.

\vspace{1em}

\noindent (\textbf{Step I}) In this step we aim to prove the following result holds:
\[
\PP\left(\max_{1\leq l\leq \cC_p^k} \left|(\hat{\rho}_l^c)^2-(\tilde{\rho}_l^c)^2\right|>cn^{-\kappa}\right)=O\Big(\exp(-Cn^{1-2\kappa})\Big),
\]
where $(\tilde{\rho}_l^c)^2=\hat{\bS}_{\cI\times\cJ^l}{\bSigma}_{\cJ^l\times\cJ^l}^{-1}\hat{\bS}_{\cI\times\cJ^l}^\top$.
Note that the determinants of matrices ${\bSigma}_{\cJ^l\times\cJ^l}$ and $\hat{\bS}_{\cJ^l\times\cJ^l}$ are polynomials of finite order in their entries, thus we have the following inequality holds,
\[
\begin{array}{lll}
\PP\left(\left||\hat{\bS}_{\cJ^l\times\cJ^l}|-|{\bSigma}_{\cJ^l\times\cJ^l}|\right|>cn^{-\kappa}\right)&\leq & \PP\left(\max_{1\leq s, t\leq k}|\hat{S}_{s,t}-\Sigma_{s,t}|>cn^{-\kappa}\right),\\
&\leq & k^2*\PP\left(|\hat{S}_{s,t}-\Sigma_{s,t}|>cn^{-\kappa}\right)\\
&=& k^2*\PP\left(\left|\sin(\frac{\pi}{2}\hat{\tau}_{s,t})-\sin(\frac{\pi}{2}{\tau}_{s,t})\right|\geq {c}n^{-\kappa}\right)\\
&\leq & k^2*\PP\left(\left|\hat{\tau}_{s,t}-{\tau}_{s,t}\right|\geq \frac{2c}{\pi}n^{-\kappa}\right)\\
&\leq & k^2*\PP\left(\left|\hat{\tau}_{s,t}-{\tau}_{s,t}\right|\geq \frac{2c}{\pi}n^{-\kappa}\right)
\end{array}
\]
Since $\hat{\tau}_{s,t}$ can be written in the form of U-statistic with a kernel bounded between -1 and 1, by Hoeffding's inequality, we have that
\[
\PP\left(\left|\hat{\tau}_{s,t}-{\tau}_{s,t}\right|\geq \frac{2c}{\pi}n^{-\kappa}\right)\leq 2\exp\left(-\frac{2c^{2}}{\pi^2}n^{1-2\kappa}\right).
\]

Thus we have for some positive constant $C^{*}$, the following inequality holds:
\begin{equation}\label{equ:theorem2proof1}
\PP\left(\left||\hat{\bS}_{\cJ^l\times\cJ^l}|-|{\bSigma}_{\cJ^l\times\cJ^l}|\right|>cn^{-\kappa}\right)
\leq\exp\left(-C^*n^{1-2\kappa}\right)
\end{equation}
By Assumption 3, $\log p=O(n^\xi)$ with $\xi\in(0,1-2\kappa)$, we further have for some positive constant $C$,
\[
\PP\left(\max_{1\leq l\leq \cC_p^k}\left||\hat{\bS}_{\cJ^l\times\cJ^l}|-|{\bSigma}_{\cJ^l\times\cJ^l}|\right|>cn^{-\kappa}\right)
\leq\exp\left(-Cn^{1-2\kappa}\right).
\]
Note that $k$ is finite and by the adjoint matrix expansion of an inverse matrix, similar to  the  above analysis, we have for any positive $c$,
\[
\PP\left(\max_{1\leq l\leq \cC_p^k}\left\|(\hat{\bS}_{\cJ^l\times\cJ^l})^{-1}-({\bSigma}_{\cJ^l\times\cJ^l})^{-1}\right\|_\infty>cn^{-\kappa}\right)
\leq\exp\left(-Cn^{1-2\kappa}\right).
\]
Notice that
\[
\begin{array}{lll}
\left|(\hat{\rho}_l^c)^2-(\tilde{\rho}_l^c)^2\right|&\leq& \|\bS_{\cI\times\cJ}\|_\infty^2\left\|(\hat{\bS}_{\cJ^l\times\cJ^l})^{-1}-({\bSigma}_{\cJ^l\times\cJ^l})^{-1}\right\|_\infty\\
&\leq&\left\|(\hat{\bS}_{\cJ^l\times\cJ^l})^{-1}-({\bSigma}_{\cJ^l\times\cJ^l})^{-1}\right\|_\infty
\end{array}
\]
Thus
\[
\PP\left(\max_{1\leq l\leq \cC_p^k} \left|(\hat{\rho}_l^c)^2-(\tilde{\rho}_l^c)^2\right|>cn^{-\kappa}\right)=O\Big(\exp(-Cn^{1-2\kappa})\Big)
\]
(\textbf{Step II}) In this step, we will first prove that for any $c>0$,
\[
\PP\left(\bmax_{1\leq l\leq \cC_p^k}|\tilde{\rho}_l^c-\rho_l^c|\geq cn^{-\kappa}\right)=O\left(\exp(-Cn^{1-2\kappa})\right).
\]
By Lemma \ref{lemma:1}, we have that
\[
\PP\left(\left|\hat{S}_{s,t}-\Sigma_{s,t}\right|\geq cn^{-\kappa}\right)=O\left(\exp(-Cn^{1-2\kappa})\right).
\]
By Assumption 3, $\log p=O(n^\xi)$ with $\xi\in(0,1-2\kappa)$, thus we have
\[
\PP\left(\bmax_{1\leq i\leq p}\bmax_{l\in \cI_i^n}\bmax_{s,t}|\hat{S}_{s,t}-\Sigma_{s,t}|\right)=O\left(\exp(-Cn^{1-2\kappa})\right).
\]
Recall that $(\tilde{\rho}_l^c)^2=\hat{\bS}_{\cI\times\cJ^l}{\bSigma}_{\cJ^l\times\cJ^l}^{-1}\hat{\bS}_{\cI\times\cJ^l}^\top$, by the property of ${\bSigma}_{\cJ^l\times\cJ^l}$, we have for any $c>0$,
\[
\PP\left(\bmax_{1\leq l\leq \cC_p^k}|\tilde{\rho}_l^c-\rho_l^c|\geq cn^{-\kappa}\right)=O\left(\exp(-Cn^{1-2\kappa})\right).
\]
Further by Assumption 4, $\bmin_{i\notin \cM_*}\bmax_{l\in\cI_i^n}\rho_l^c< c_1^* n^{-\kappa}$ and the last equation, we have that
\[
\PP\left(\bmax_{i\notin \cM_*}\bmax_{l\in\cI_i^n}\tilde{\rho}_l^c<c_1^*n^{-\kappa}\right)\geq 1-O\left(\exp(-Cn^{1-2\kappa})\right).
\]
By the result in Step I, we have that
\[
\PP\left(\max_{1\leq l\leq \cC_p^k} \left|(\hat{\rho}_l^c)^2-(\tilde{\rho}_l^c)^2\right|>cn^{-\kappa}\right)=O\Big(\exp(-Cn^{1-2\kappa})\Big)
\]
Thus we further have
\begin{equation}\label{equ:theorem2proof2}
\PP\left(\bmax_{i\notin \cM_*}\bmax_{l\in\cI_i^n}\hat{\rho}_l^c<c_1^*n^{-\kappa}\right)\geq 1-O\left(\exp(-Cn^{1-2\kappa})\right).
\end{equation}
By Theorem \ref{theorem:1}, the following inequality holds:
\[
\PP\left(\bmin_{i\in \cM_*}\bmax_{l\in\cI_i^n}\hat{\rho}_l^c\geq c_1^* n^{-\kappa}\right)\geq 1-O\left(\exp\left(-{Cn^{1-2\kappa}}\right)\right),
\]
then combining the result in Equation (\ref{equ:theorem2proof2}), we have
\[
\PP\left(\cM_*=\hat{\cM}_{c_1^* n^{-\kappa}}\right)\geq 1-O\left(\exp\left(-{Cn^{1-2\kappa}}\right)\right),
\]
which concludes the theorem.
\end{proof}

\section*{Proof of Theorem \ref{theorem:4}}
\begin{proof}
Let $\delta\rightarrow 0$ satisfying $\delta n^{1-2\kappa-\tau-\tau^*}\rightarrow \infty$ as $n\rightarrow \infty$ and define
\[
\begin{split}
\cM(\delta)=\left\{1\leq i\leq p:\bmax_{l\in \cI_i^n}\hat{\rho}_l^c \ \text{is among the largest} \ \lfloor\delta p\rfloor \ \text{of all}\right\}\\
\tilde{\cM}(\delta)=\left\{1\leq i\leq p:\bmax_{l\in \cI_i^n}\tilde{\rho}_l^c \ \text{is among the largest} \  \lfloor\delta p\rfloor \ \text{of all}\right\}
\end{split}
\]
where  $(\tilde{\rho}_l^c)^2=\hat{\bS}_{\cI\times\cJ^l}{\bSigma}_{\cJ^l\times\cJ^l}^{-1}\hat{\bS}_{\cI\times\cJ^l}^\top$ and where $(\hat{\rho}_l^c)^2=\hat{\bS}_{\cI\times\cJ^l}{\hat{\bS}}_{\cJ^l\times\cJ^l}^{-1}\hat{\bS}_{\cI\times\cJ^l}^\top$.
We will first show that
\begin{equation}\label{equ:theorem4proof1}
  \PP\Big(\cM_*\subset\cM(\delta)\Big)\geq 1-O\left(\exp\left(-{Cn^{1-2\kappa}}\right)\right)
\end{equation}
By Theorem \ref{theorem:1}, it is equivalent to show that
\[
  \PP\Big(\cM_*\subset\cM(\delta)\cap\hat{\cM}_{c_1^*n^{-\kappa}}\Big)\geq 1-O\left(\exp\left(-{Cn^{1-2\kappa}}\right)\right)
\]
By Step I in the proof of Theorem \ref{theorem:2}, it is also equivalent to show that
\[
  \PP\Big(\cM_*\subset\tilde{\cM}(\delta)\cap\hat{\cM}_{c_1^*n^{-\kappa}}\Big)\geq 1-O\left(\exp\left(-{Cn^{1-2\kappa}}\right)\right).
\]
Finally, by Theorem \ref{theorem:1} again, to prove Equation (\ref{equ:theorem4proof1}) is equivalent to prove that

\begin{equation}\label{equ:theorem4proof3}
  \PP\Big(\cM_*\subset\tilde{\cM}(\delta)\Big)\geq 1-O\left(\exp\left(-{Cn^{1-2\kappa}}\right)\right).
\end{equation}
Recall that in the proof of Theorem \ref{theorem:1}, we obtained the following result:
\[
\PP\left(\bmin_{i\in \cM_*}\bmax_{l\in\cI_i^n}\tilde{\rho}_l^c\geq c_1^* n^\kappa\right)\geq 1-O\left(\exp\left(-{Cn^{1-2\kappa}}\right)\right).
\]
If we can prove that
\begin{equation}\label{equ:theorem4proof2}
\PP\left(\sum_{i=1}^p(\bmax_{l\in\cI_i^n}\tilde{\rho}_l^c)^2\leq cn^{-1+\tau^*+\tau}p\right)\geq 1-O\left(\exp\left(-{Cn^{1-2\kappa}}\right)\right).
\end{equation}
Then we have, with probability larger than $1-O\left(\exp\left(-{Cn^{1-2\kappa}}\right)\right)$,
\[
\text{Card}\left\{1\leq i\leq p; \bmax_{l\in\cI_i^n}\tilde{\rho}_l^c\geq \min_{i\in \cM_*}\bmax_{l\in\cI_i^n}\tilde{\rho}_l^c\right\}\leq \frac{cp}{n^{1-2\kappa-\tau-\tau^*}},
\]
which further indicate that the result in (\ref{equ:theorem4proof3}) holds due to $\delta n^{1-2\kappa-\tau-\tau^*}\rightarrow \infty$. So to end the whole proof, we just need to show that the result in (\ref{equ:theorem4proof2}) holds.

For each $1\leq i\leq p$, let $\tilde{\rho}^c_{i_0}=\max_{l\in \cI_i^n}\tilde{\rho}^c_{l}$.
Note that $(\tilde{\rho}_{i_0}^c)^2=\hat{\bS}_{\cI\times\cJ^{i_0}}{\bSigma}_{\cJ^{i_0}\times\cJ^{i_0}}^{-1}\hat{\bS}_{\cI\times\cJ^{i_0}}^\top$, with $\hat{\bS}_{\cI\times\cJ^{i_0}}=(\hat{S}_{1,m_1^{i_0}},\ldots,\hat{S}_{1,m_k^{i_0}})$. By Assumption 1, we have
\[
(\tilde{\rho}_{i_0}^c)^2\leq c_0\sum_{t=1}^k(\hat{S}_{1,m_t^{i_0}})^2=c_0\|\hat{\bS}_{\cI\times\cJ^{i_0}}\|_2^2,
\]
which further indicates
\[
\sum_{i=1}^p(\tilde{\rho}_{i_0}^c)^2\leq c_0k k_n \|\hat{\bS}_{\cI\times\cT}\|_2^2, \ \text{with} \  \cT=\{2,\ldots,d\}.
\]
Notice that
\[
\PP\left(\left|\hat{S}_{s,t}-\Sigma_{s,t}\right|\geq cn^{-\kappa}\right)\leq 2\exp\left(-\frac{2c^2}{\pi^2}n^{1-2\kappa}\right),
\]
thus similar to the argument in \cite{Kong2017Sure}, we can easily get that
\[
\PP\left(\sum_{i=1}^p(\bmax_{l\in\cI_i^n}\tilde{\rho}_l^c)^2\leq cn^{-1+\tau^*+\tau}p\right)\geq 1-O\left(\exp\left(-{Cn^{1-2\kappa}}\right)\right).
\]
Finally, following the same idea of iterative screening as in the proof of Theorem 1 of \cite{Fan2008Sure}, we  finish the proof of the theorem.
\end{proof}

\end{appendices}

\section*{Acknowledgements}
Yong He's research is partially supported by the grant of the National Science Foundation of China (NSFC 11801316). Xinsheng Zhang's research is partially supported by the grant of the National Science Foundation of China (NSFC 11571080). Jiadong Ji's work is supported by the grant from the    the grant of the National Science Foundation of China (NSFC 81803336) and Natural Science Foundation of Shandong Province (ZR2018BH033).

\bibliographystyle{plain}
\bibliography{KaiTiRef}

\end{document}